\documentclass[%
 reprint,
groupedaddress,
 amsmath,amssymb,
 aps,
 prx,
 floatfix,
]{revtex4-2}

\usepackage{silence}
\WarningFilter{revtex4-2}{Repair the float}
\usepackage[bookmarksnumbered,linktocpage,hypertexnames=false,colorlinks=true,linkcolor=blue,urlcolor=blue,citecolor=blue,anchorcolor=green,breaklinks=true,pdfusetitle]{hyperref}
\usepackage{bm,bbm,bbold,braket,amsthm,amsfonts,mathtools,graphicx,enumitem,float,mathdots,enumitem,mleftright,framed,overpic,booktabs}

\usepackage[dvipsnames]{xcolor}
\usepackage{tikz}
\usepackage{quantikz}
\usepackage{dsfont}
\usepackage{soul} 
\usepackage{printlen}
\usepackage{chemformula}

\usepackage[longend,ruled,linesnumbered,nokwfunc]{algorithm2e}
\SetKwInput{Input}{Input}
\SetKwInput{Output}{Output}
\SetKwFor{Loop}{loop}{}{end loop}

\usepackage[capitalize,nameinlink]{cleveref}

\tikzstyle{box} = [rectangle, rounded corners, 
minimum width=3cm, 
minimum height=1cm,
text centered]
\tikzstyle{arrow}=[->]

\newtheorem{thm}{Theorem}[section]\crefname{thm}{Theorem}{Theorems}
\newtheorem*{thm*}{Theorem}
\newtheorem{lem}[thm]{Lemma}\crefname{lem}{Lemma}{Lemmas}
\crefname{prop}{Proposition}{Propositions}
\crefname{cor}{Corollary}{Corollaries}
\theoremstyle{definition}
\crefname{def}{Definition}{Definitions}
\theoremstyle{remark}

\newcommand{\del}{\partial}

\begin{document}

\title{Relative phase and dynamical phase sensing in a Hamiltonian model of the optical SU(1,1) interferometer}

\author{T.J.\,Volkoff}
\affiliation{Theoretical Division, Los Alamos National Laboratory, Los Alamos, NM, USA}

\begin{abstract}
The $SU(1,1)$ interferometer introduced by Yurke, McCall, Klauder \cite{PhysRevA.33.4033} is reformulated starting from the Hamiltonian of two identical optical downconversion processes with opposite pump phases. From the four optical modes, two are singled out up to a relative phase by the assumption of exact alignment of the interferometer (i.e., mode indistinguishability). The state of the two resulting modes is parametrized by the nonlinearity $g$, the relative phase $\phi$, and a dynamical phase $\theta$ resulting from the interaction time. The optimal operating point for sensing the relative phase (dynamical phase) is found to be $\phi = \pi$ ($\theta=0$) with quantum Fisher information exhibiting Heisenberg scaling $E^{2}$ (logarithmically modified Heisenberg scaling $\left({E\over \ln E}\right)^{2}$). Compared to the predictions of the circuit-based model, we find in that in the Hamiltonian model: 1. the optimal operating points occur for a non-vacuum state inside the interferometer, and 2. measurement of the total photon number operator does not provide an estimate of the relative or dynamical phase with precision that saturates the quantum Cram\'{e}r-Rao bound, whereas an observable based on weighted shift operators becomes optimal as $g$ increases. The results indicate a first-principles approach for describing general optical quantum sensors containing multiple optical downconversion processes.
\end{abstract}

\maketitle

\section{Introduction}
The $SU(1,1)$ interferometer was introduced by Yurke, McCall, and Klauder by replacing the $SU(2)$ beamsplitter transformations of a quantum circuit-based description of the Mach-Zehnder interferometer with corresponding $SU(1,1)$ transformations \cite{PhysRevA.33.4033}. The construction led to an astounding prediction: whereas a port of an $SU(2)$ interferometer must be seeded with non-classical light in order to obtain an estimate of the optical phase with mean squared error scaling as $N^{-2}$ (with $N$ being the expected photon number per mode), the ports of an $SU(1,1)$ interferometer can be left in vacuum while obtaining the same scaling of the precision. This difference in the input optical resource requirements is due to the fact that $SU(1,1)$ transformations describe \textit{active} optical dynamics, in particular, the optical sector of the Hamiltonian of a coherently pumped light-matter system containing a two-photon downconversion process \cite{mandelwolf}. Optically active dynamics produces a non-classical Gaussian state of light inside the $SU(1,1)$ interferometer, even from vacuum input. The description of the $SU(1,1)$ interferometer as a composition of optically active dynamics, in which entangled, phase shifted light from a first optical downconversion process seeds a temporally subsequent optical downconversion with a different phase \cite{PhysRevA.33.4033}, has inspired a general mechanism by which non-linear interactions can be used for quantum sensing protocols starting from unentangled quantum states, with light-based implementations \cite{PhysRevLett.119.223604,PhysRevA.96.053863,10.1063/5.0207332,Tian:23}, and matter-based implementations \cite{PhysRevLett.127.183401,Mao2023,PhysRevLett.116.053601,yao,PhysRevA.110.062425,PhysRevA.108.043302}, including trapped ions \cite{wineland}. The similarities between the quantum circuit description of optical interferometers and the quantum circuit description of quantum computation has also led to interferometers being considered as building blocks for quantum algorithms \cite{Knill2001,mosca,nielsen}.

Notwithstanding its enormous influence, the framework of \cite{PhysRevA.33.4033} leaves many open questions which have been instructively raised or revisited in the literature. For instance, the signal-to-noise ratio of the total photon number readout requires obtains Heisenberg scaling only at second order in the phase parameter. This fact partly motivated  development of a framework in which the $SU(1,1)$ interferometer is considered as a device suited for sensing relative phase space displacements between optical modes \cite{cavesreframing}. Analyses of noisy $SU(1,1)$ interferometers have led to the understanding that downconversion process is more correctly viewed as a composition of noisy quantum channels instead of a unitary two-mode squeezer \cite{PhysRevA.78.043816}. In the present work, which like \cite{PhysRevA.33.4033} is motivated by developing a workable analogy with the $SU(2)$ case, we develop a Hamiltonian description of the $SU(1,1)$ interferometer and analyze its optimal sensitivity with respect to the phase parameters of the system. The results are compared with quantum circuit-based approaches \cite{PhysRevA.33.4033,PhysRevA.85.023815}.

Specifically, our analysis of the maximal quantum Fisher information (QFI) and optimal readout for the relative phase $\phi$ of the Hamiltonian model in Section \ref{sec:relphase} shows that Heisenberg scaling $O(E^{2})$ of the sensitivity with respect to the total energy $E$ is obtained at the operating point $\phi=\pi$. Since the relative phase is not a native parameter of circuit-based models, this provides a prediction of Heisenberg scaling for a kinematic phase parameter of an $SU(1,1)$ interferometer. For the task of estimation of the dynamical phase $\theta$, the maximal QFI is found to occur at the operating point $\theta=0$, like in the circuit-based model of \cite{PhysRevA.33.4033}, but this operating point corresponds to non-trivial dynamics in the Hamiltonian-based model of the present work, whereas the corresponding quantum circuit is the identity operator at this point. For both phase parameters, we describe an optimal readout exhibiting a near-cancellation in noise contributions from total intensity and two-photon coherence. These sensitivity results, along with basic differences in optical properties such as intensity at various operating points, provide experimental targets for falsification of either model. Less dramatically, the results provide further evidence that different operating modes of the device of \cite{PhysRevA.33.4033} deserve different dynamical descriptions, a point that is also motivated by analyses of noisy $SU(1,1)$ interferometers  \cite{PhysRevA.78.043816,PhysRevA.86.023844}.

A general conceptual reason to exercise caution when applying the quantum circuit model to optical systems is provided by the fact that while a Hamiltonian quadratic in the boson creation and annihilation operators of $m$ optical modes (i.e., an element of the symplectic Lie algebra $\mathfrak{sp}(2m,\mathbb{R})$) always generates a time-evolution corresponding to an element of $Sp(2m,\mathbb{R})$ \cite{holevo, folland}, the logarithm of a product of two or more $Sp(2m,\mathbb{R})$ group elements is not necessarily an element of $\mathfrak{sp}(2m,\mathbb{R})$. Quantum circuit-based descriptions of active optical networks therefore lead to predictions about dynamics that are not predicted by Hamiltonian-based descriptions \cite{PhysRevResearch.6.L042011}. Section \ref{sec:sss1} details the construction of both of these models of the $SU(1,1)$ interferometer, and in Section \ref{sec:dynphase} we show that the models lead to conflicting results for the sensing performance of the $SU(1,1)$ interferometer and the structure of the quantum state at the optimal operating point. In the next section, we show that these conflicts are absent in passive optical interferometers.

\section{Warm up: no difference between a circuit-based and Hamiltonian-based $SU(2)$ interferometer}

The Hamiltonian describing the interior of a two-mode Mach-Zehnder interferometer at some symmetrically chosen interaction time is that of the free electromagnetic field on two orthogonal modes $\ket{1}$ and $\ket{2}$,
\begin{equation}
H={\theta \over 2}b_{1}^{\dagger}b_{1} - {\theta \over 2}b_{2}^{\dagger}b_{2}.
\end{equation}
Different choices of the interaction time add a term proportional to the constant of motion $\sum_{j=1}^{2}b_{j}^{\dagger}b_{j}$. However, one desires that the interferometer be sensitive to changes in $\theta$ even only one beam of light is available, e.g., one has access to coherent light in a mode $\ket{1'}$ and vacuum in another mode $\ket{2'}$. Properly aligned, lossless, symmetric beamsplitters allow one to make the modes $\ket{1}$, $\ket{2}$ indistinguishable from hybridized modes ${\ket{1'}+e^{i\phi}\ket{2'}\over\sqrt{2}}$, ${-e^{-i\phi}\ket{1'}+\ket{2'}\over\sqrt{2}}$. This completes the Mach-Zehnder device, which is described by the Hamiltonian
\begin{align}
H'=& {\theta \over 2}\left[ \left( {a_{1'}^{\dagger}+e^{i\phi}a_{2'}^{\dagger}\over \sqrt{2}} \right)\left( {a_{1'}+e^{-i\phi}a_{2'}\over \sqrt{2}}  \right) \right. \nonumber \\
&{} \left. - \left( {-e^{-i\phi}a_{1'}^{\dagger}+a_{2'}^{\dagger}\over \sqrt{2}} \right)\left( {-e^{i\phi}a_{1'}+a_{2'}\over \sqrt{2}}  \right) \right] .
\end{align}
The parameters $\theta$ and $\phi$ have different physical interpretations: $\phi$ is a relative phase in $[0,2\pi)$ whereas for continuous-variable quantum systems without a particle number superselection rule, $\theta$ is not a phase in the sense of being unitless.  Specifically, the parameter $\theta$ is an interaction time, whereas $\phi$ is a phase shift from, e.g., optical path length changes or beamsplitter dielectrics. We call $\theta$ a dynamical phase because its product with $\hbar^{-1}$ results in a unitless phase ($\hbar=1$ in this work).
It is then clear from the group law of $SU(2)$ that for any input state $\ket{\eta}_{1'}\ket{\xi}_{2'}$,
\begin{align}
e^{-iH'}\ket{\eta}_{1'}\ket{\xi}_{2'} = &e^{i{\pi\over 2}(\sin \phi J_{x} - \cos \phi J_{y})}\nonumber \\
&{} \cdot e^{-i\theta J_{z}}\nonumber \\
&{} \cdot e^{-i{\pi\over 2}(\sin \phi J_{x} - \cos \phi J_{y})}\ket{\eta}_{1'}\ket{\xi}_{2'}
\label{eqn:su2}
\end{align}
where the quantum circuit on the right hand side involves the Schwinger boson operators $J_{x}$, $J_{y}$, $J_{z}$ with respect to the $\ket{1'}$ and $\ket{2'}$ modes. Note that the state cannot be used to probe the parameter $\phi$ if $\theta = 0$. The Hamiltonian model of the interferometer allows one to keep track of the interactions between the modes, however simple or complicated they may be, while the circuit-based model allows a phase space interpretation of the Mach-Zehnder device. Since the models agree on the quantum state, discussion of the $SU(2)$ interferometer with either model is a matter of preference. We emphasize that the difference between the models is not the distinction between the Schr\"{o}dinger and Heisenberg pictures of the quantum dynamics. No observable or readout has been specified; both models describe the Schr\"{o}dinger picture dynamics of the input state.

\section{Hamiltonian and circuit-based models of two-mode $SU(1,1)$ interferometer\label{sec:sss1}}

An optical diagram for the two-mode $SU(1,1)$ interferometer is given in Fig. 6 of \cite{PhysRevA.33.4033}, and in Fig. \ref{fig:ooo} with the mode labeling and considered in the present work. Fig. \ref{fig:ooo} shows two optical downconversion processes that are pumped by a coherent beam that incurs a $\pi$ phase shift between the processes. For consistency, we refer to these processes by  FWM1 and FWM2 (with FWM standing for four-wave mixing) and, initially, consider them to be independent processes. Specifically, without any assumption on the relationship between the modes, the Hamiltonian in the undepleted pump approximation is obtained by summing the contributions from FWM1 and FWM2
\begin{align}
H&= H_{0}+ H_{\text{int}} \nonumber \\
H_{\text{int}}&:= g\left( a_{1}^{\dagger}a_{2}^{\dagger} +h.c.\right) - g\left( b_{1}^{\dagger}b_{2}^{\dagger} +h.c.\right)
\label{eqn:hhh}
\end{align}
where the downconversion modes from FWM1 are associated with $a_{1}$, $a_{2}$ and those from FWM2 are associated with $b_{1}$, $b_{2}$, respectively, and the Hamiltonian $H_{0}$ is the free field Hamiltonian for a set of orthogonal modes in the Hilbert space spanned by $a_{1}^{\dagger}\ket{0}$, $a_{2}^{\dagger}\ket{0}$, $b_{1}^{\dagger}\ket{0}$, $b_{2}^{\dagger}\ket{0}$. The parameter $g>0$ takes into account the pump strength and the $\chi^{(2)}$ nonlinearity of the material.

We now relate the output modes of FWM1, which we call $a_{1}$, $a_{2}$, to the output modes of FWM2, which we call $b_{1}$, $b_{2}$, by complex scalars that are ultimately determined by the optical elements in the interior of the interferometer and their alignment. Mathematically, the pairs of modes $a_{1}$ and $b_{1}$ and $a_{2}$ and $b_{2}$ are partially distinguishable, in a way that is specified by the commutation relations
\begin{align}
[a_{i},a_{j}^{\dagger}]&=[b_{i},b_{j}^{\dagger}]=\delta_{ij}\mathbb{I} \nonumber \\
[a_{1},b_{2}^{\dagger} ]&= z_{1}\mathbb{I} \; , \;  \vert z_{1}\vert\in [0,1]\nonumber \\
[a_{2},b_{1}^{\dagger} ]&= z_{2}\mathbb{I} \; , \;  \vert z_{2}\vert\in [0,1].
\label{eqn:comm}
\end{align}
If $\vert z_{i}\vert <1$, unitarity requires the introduction of auxiliary environment modes to obtain the full description of the system \cite{PhysRevA.109.023704}. However, in the present work, we consider the modes $a_{1}$ and $b_{2}$ to be indistinguishable up to a phase shift in the wavefunction, and the modes $a_{2}$ and $b_{1}$ to be strictly indistinguishable. Therefore, the phase parameter $\phi$ is introduced by taking $z_{1}=e^{-i\phi}$, whereas the complex scalar $z_{2}=1$. In an experiment, one empirically determines the appropriate arguments of $z_{1}$ and $z_{2}$ according to the device, including a choice of reference phase.  The phase $\phi$ is analogous to the phase $\phi$ in the $SU(2)$ interferometer (\ref{eqn:su2}). Our decision to analyze  $\vert z_{i}\vert=1$ for the distinguishability parameters is motivated by the  alignment of the FWM1 and FWM2 processes proposed in \cite{PhysRevA.33.4033}, which is achieved by the use of two mirrors. This alignment analysis involves the same consideration of path indistinguishability that is central to the description of the Wang-Zou-Mandel effect \cite{PhysRevA.44.4614,PhysRevA.41.1597,PhysRevA.109.023704}. 

 \begin{figure}[t]
    \centering
    \includegraphics[scale=0.7]{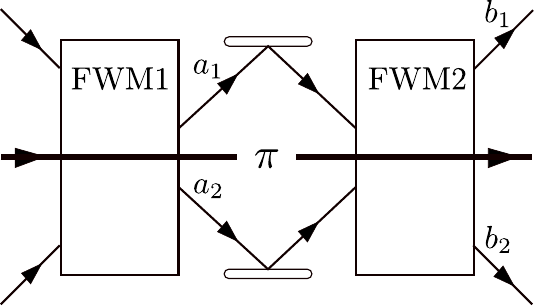}
    \caption{$SU(1,1)$ interferometer with downconversion modes labeled.}
    \label{fig:ooo}
\end{figure}

 Having taken the modes $a_{1}$, $b_{2}$ to be related by a phase $\phi$, and the modes $a_{2}$, $b_{1}$ to be identical, the Hamiltonian (\ref{eqn:hhh}) simplifies to a two-mode system
\begin{align}
H&=H_{0} + H_{\text{int}} \nonumber \\
H_{0}&= \theta(a_{1}^{\dagger}a_{1}+a_{2}^{\dagger}a_{2})\nonumber \\
H_{\text{int}}&= {2g\sin{\phi\over 2}}\left( ie^{-i{\phi\over 2}}a_{1}^{\dagger}a_{2}^{\dagger} +h.c.\right)
\label{eqn:hhhh}
\end{align}
which is obtained simply by taking $b_{2}^{*}=e^{-i\phi}a_{1}^{*}$ and $b_{1}^{*}=a_{2}^{*}$ in (\ref{eqn:hhh}).  Because the difference $a_{1}^{\dagger}a_{1}-a_{2}^{\dagger}a_{2}$ is an $SU(1,1)$ constant of the motion, only the total photon number operator $a_{1}^{\dagger}a_{1}+a_{2}^{\dagger}a_{2}$ appears in the free field Hamiltonian. Therefore, the free field Hamiltonian is specified in a way that is exactly analogous to (\ref{eqn:su2}), with the parameter $\theta$ again having the interpretation of interaction time. The probe state is given by $e^{-iH}\ket{0}_{1}\ket{0}_{2}$.   This parameterized state merits the name ``$SU(1,1)$ interferometer'' due to the fact that $H$ is an element of an $\mathfrak{sp}(2,\mathbb{R})$ subalgebra of $\mathfrak{sp}(4,\mathbb{R})$ in its two-mode bosonic form. Note that the generators of $\mathfrak{sp}(2,\mathbb{R})$
\begin{align}
    K_{1}&= {a_{1}^{\dagger}a_{2}^{\dagger} +h.c. \over 2} \nonumber \\
    K_{2}&= {-ia_{1}^{\dagger}a_{2}^{\dagger} +h.c. \over 2} \nonumber \\
    K_{3}&= {a_{1}^{\dagger}a_{1}+ a_{2}^{\dagger}a_{2} +1 \over 2}
    \label{eqn:br}
\end{align}
 can be converted to generators of $\mathfrak{su}(1,1)$ by multiplying by $i$, so the Lie algebras are isomorphic \footnote{The global relationship between the real symplectic groups and the pseudounitary group can be stated by the  following strict containments as Lie subgroups: $Sp(2n,\mathbb{R})\subset U(n,n)\subset Sp(4n,\mathbb{R})$.}. We emphasize that $H$ is the Hamiltonian of a system of two downconversion processes that only takes into account a relative $\pi$ phase shift of the pump between the processes, not the global phase of the pump. In particular, the phase $\phi$ is a property of the wavefunctions $a_{1}$ and $b_{1}$, not a property of the pump mode, and this is why the effective nonlinearity $g$ is renormalized by a $\phi$-dependent factor $2\sin{\phi\over 2}$. The Hamiltonian for a general downconversion process in the undepleted pump approximation can be written as an element of $\mathfrak{sp}(2,\mathbb{R})$ with a coupling constant $ge^{2i\Phi}$ where $g$ is the bare nonlinearity proportional to the $\chi^{(2)}$ susceptibility of the nonlinear optical medium and $\Phi$ is the global pump phase \cite{PhysRevA.39.2493,wm} that we neglect in the present analysis; this was also neglected by Yurke, McCall, Klauder. We also note that $\mathfrak{sp}(2,\mathbb{R})$ also has a single-mode bosonic realization, so a Hamiltonian model of a $SU(1,1)$ interferometer can also be developed for a single optical mode by first considering two independent degenerate downconversion processes $g(a^{\dagger 2}+a^{2})$ and $-g(b^{\dagger 2}+b^{2})$, then applying the analogous mode distinguishability considerations above. In Appendix \ref{sec:aaa} we show the covariance matrix for the full state $e^{-iH}\ket{0}\ket{0}$ and its symplectic eigenvalues, which determine its squeezing and entanglement properties.

To contrast with the resulting Hamiltonian model $e^{-iH}$, consider the dynamics defined by the quantum circuit \cite{Plick_2010} 
\begin{equation}
V = e^{-g(a_{1}^{\dagger}a_{2}^{\dagger}-h.c.)} e^{i\theta a_{2}^{\dagger}a_{2}} e^{g(a_{1}^{\dagger}a_{2}^{\dagger}-h.c.)}.
\label{eqn:ccc}
\end{equation}
The relative phase $\phi$ is absent in this model, and it is standard in the literature to consider a single phase parameter which couples to the photon number operator. This is natural because the first and last circuit elements are understood to describe sequential downconversion processes, and replacing the generator $a_{1}^{\dagger}a_{2}^{\dagger}$ with $e^{-i\phi}a_{1}^{\dagger}a_{2}^{\dagger}$ would give $\phi$ the same form as a global pump phase $2\Phi$ discussed above. Further, the $\mathfrak{su}(1,1)$ generators in the first and last circuit layers couple to the bare nonlinearity $g$ so there is no notion of effective (i.e., renormalized) nonlinearity.
The unitary two-mode squeezing operators at the beginning and end of the circuit (\ref{eqn:ccc}) are elements of $SU(1,1)$.
However, the middle phase shift is not an element of $SU(1,1)$. This can be remedied by noting that the circuit is invariant under insertion of a circuit element $e^{i\theta '(a_{1}^{\dagger}a_{1}-a_{2}^{\dagger}a_{2})}$ anywhere in the product, and this fact can be used to convert (\ref{eqn:ccc}) to a composition of $SU(1,1)$ operations.  Specifically, taking $\theta ' = \theta/2$ gives 
\begin{equation}
V = e^{-g(a_{1}^{\dagger}a_{2}^{\dagger}-h.c.)} e^{i{\theta \over 2} \left( a_{2}^{\dagger}a_{2} + a_{1}^{\dagger}a_{1} \right)} e^{g(a_{1}^{\dagger}a_{2}^{\dagger}-h.c.)}
\label{eqn:su11ccc}
\end{equation}
which has its middle phase shift in $SU(1,1)$.
Similarly, the composition of scattering matrices in the original formulation of the $SU(1,1)$ interferometer in \cite{PhysRevA.33.4033}, i.e., the Heisenberg picture transformation of the modes $a_{1}$ and $a_{2}$, is implemented by the circuit
\begin{equation}
V=e^{-i{g\over 2}\left(a_{1}^{*}a_{2}^{*}+h.c.\right)}e^{i\left( \theta_{1}a_{1}^{\dagger}a_{1} + \theta_{2}a_{2}^{\dagger}a_{2} \right)}e^{i{g\over 2}\left(a_{1}^{*}a_{2}^{*}+h.c.\right)}.
\label{eqn:ykyk}
\end{equation}
The middle circuit element can be converted to an $SU(1,1)$ operation using the same trick, with the resulting circuit being a function of $g$ and  $\theta=\theta_{1}+\theta_{2}$.

A similar circuit-based model involving two downconversion processes appears in \cite{10.1063/1.3606549,PhysRevA.85.023815,Hudelist2014, 	jing}.  This alternative model does not utilize an inverted downconversion process for the last circuit element, but is actually equal to (\ref{eqn:ccc}) when the latter is seeded with a two-mode squeezed vacuum produced by a downconversion with nonlinearity $2g$. We therefore restrict our analysis of circuit-based models to the form (\ref{eqn:ccc}) or, equivalently, to $SU(1,1)$ elements of the form (\ref{eqn:su11ccc}).

 If the circuit $V$ is seeded by non-vacuum states, the structure of the phase in (\ref{eqn:su11ccc}) is crucial because the QFI depends on how the dynamical phase $\theta$ is translated into the quantum circuit model \cite{Gong_2017}. In this case, the discrepancies can be resolved by calculating the QFI for the various phase shift circuits with a dephased probe state that models the absence of an external phase reference \cite{PhysRevA.99.042122}. Because our present goal is to compare the circuit-based model with the Hamiltonian-based model only in terms of dynamics, we will restrict both models to vacuum inputs.

As discussed above, the Hamiltonian model (\ref{eqn:hhhh}) is derived from the Hamiltonian of the quantum electromagnetic field of two independent pairs of modes simply by assuming indistinguishability of modes up to relative phase shifts. This direct relationship to the quantum optical dynamics motivates one to examine the Hamiltonian corresponding to the circuit-based dynamics in (\ref{eqn:su11ccc}). However, in doing this, one runs into the fact that the composition of $SU(1,1)$ operators in (\ref{eqn:su11ccc}) is a fine-tuned description of the dynamics. Specifically, although there is a Hamiltonian in $\mathfrak{sp}(2,\mathbb{R})$ that exponentiates to (\ref{eqn:su11ccc}) via $e^{-iH}$ (due to the fact that the first and last layer are exactly inverse), this is not generally true for the parametrized circuits
 \begin{equation}
V(g_{1},g_{2},\theta)= e^{g_{2}(a_{1}^{\dagger}a_{2}^{\dagger}-h.c.)} e^{i{\theta\over 2} \left( a_{2}^{\dagger}a_{2} + a_{1}^{\dagger}a_{1} \right)} e^{g_{1}(a_{1}^{\dagger}a_{2}^{\dagger}-h.c.)}.
\label{eqn:gencirc}
\end{equation}
In fact, for $g_{2} = -g_{1}\pm \epsilon$ and $g_{1}>0$, $0<\epsilon<g_{1}$, for which $V(-g\pm \epsilon,g,\theta)$ appears to describe an arbitrarily small mismatch in the magnitude of nonlinearity of the downconversion process in Fig. \ref{fig:ooo} (due to, e.g., intensity fluctuations of the pump laser), we show in (\ref{eqn:uhuh}) below that no such Hamiltonian in $\mathfrak{sp}(2,\mathbb{R})$ exists. Indeed, the exponential map $\text{expi}:\mathfrak{sp}(2,\mathbb{R})\rightarrow SU(1,1)$ given by $\text{expi}H := e^{-iH}$ is not surjective, nor has dense image \cite{Chiribella2006,TORRE2005111}, indicating that certain open subsets of $SU(1,1)$ CV circuits do not correspond to a Hamiltonian that is quadratic in the creation and annihilation operators \cite{PhysRevResearch.6.L042011}. On the other hand, a quadratic Hamiltonian is fundamental in the analysis of the quantum electromagnetic field of a system of optically pumped downconversion processes in the undepleted pump approximation, so the Hamiltonian describing the optical diagram in Fig.\ref{fig:ooo} is an element of $\mathfrak{sp}(2,\mathbb{R})$.  We therefore contend that fine-tuned CV circuits such as (\ref{eqn:su11ccc}) are not appropriate for describing Hamiltonian dynamics of the $SU(1,1)$ interferometer, at least with the parameters $g$ and $\theta$ having the same meaning as downconversion nonlinearity and dynamical phase, respectively. Finding a continuous-variable (CV) Gaussian circuit decomposition of the dynamics $e^{-iH}$ using a product $\prod_{\ell=1}^{L}e^{-iH_{\ell}}$ of time-evolutions generated by quadratic Hamiltonians is a problem of CV-to-CV quantum simulation or compiling a CV quantum circuit \cite{PRXQuantum.2.040327,PhysRevLett.126.190504,Chiribella2006}, which we consider an important target for future work for digital quantum simulation of nonlinear optical interferometers.
 
 To concretely illustrate the point in the case of the quantum dynamics of Fig. \ref{fig:ooo}, we consider the  circuit dynamics (\ref{eqn:gencirc}) which is an element of $SU(1,1)$.
 To determine the existence of $H\in \mathfrak{sp}(2,\mathbb{R})$ such that $e^{-iH}$ is equal to (\ref{eqn:gencirc}), consider first the defining representation $\mathbb{R}^{2}$ in which $Sp(2,\mathbb{R})$ is given by the  $2\times 2$ matrices that preserve the symplectic form $\Delta(z,w)=z_{1}w_{2}-z_{2}w_{1}$. The Lie algebra $\mathfrak{sp}(2,\mathbb{R})$ can be given by $K_{1}={-i\over 2}X$, $K_{2}={-i\over 2}Y$, $K_{3}={1\over 2}Z$ where $X$, $Y$, $Z$ are the Pauli matrices, and the $2\times 2$ matrix  generated by the Hamiltonian $H= \sum_{j=1}^{3}s_{j}K_{j}$ can be written $e^{-iH}\in SU(1,1)$. This matrix has trace greater than or equal to $-2$ \cite{Chiribella2006}. Using the relation between $\mathfrak{sp}(2,\mathbb{R})$ and $\mathfrak{su}(1,1)$, we write (\ref{eqn:gencirc}) as  $e^{g_{2}Y}e^{i{\theta \over 2}Z}e^{g_{1}Y}$. One verifies that for $\theta\in [\pi,3\pi]$ the trace condition for this matrix translates to the condition
 \begin{equation}
 \cosh (g_{1}+g_{2}) \le {1\over \vert \cos {\theta\over 2}\vert}.
 \label{eqn:uhuh}
 \end{equation}
 This condition is necessary and sufficient that there exists $H\in \mathfrak{sp}(2,\mathbb{R})$ such that $\log e^{g_{2}Y}e^{i{\phi\over 2}Z}e^{g_{1}Y} =iH$, i.e., that there exists a quadratic Hamiltonian  $H$ corresponding to the circuit (\ref{eqn:gencirc}). The fine tuned circuits $V(-g,g,\theta)$ constitute a measure zero submanifold of circuits $V(g_{1},g_{2},\theta)$ for which such an $H$ exists. In the reverse direction of finding effective CV quantum circuits for $e^{-iH}$, one can note that the Cartan involution $\Theta(H) = H^{\dagger}$ on $\mathfrak{sp}(2,\mathbb{R})$ gives a direct sum decomposition of $\mathfrak{sp}(2,\mathbb{R})$ into $\mathfrak{p}=\text{span}_{\mathbb{R}}\lbrace K_{1},K_{2}\rbrace$ and $\mathfrak{l}=\mathbb{R}K_{3}$, from which it follows that $\text{expi}H \in \text{expi}\mathfrak{l}\cdot \text{expi}\mathfrak{p}\cdot \text{expi}\mathfrak{l}$ (i.e., the $KAK$ decomposition \cite{knapp}).

\section{Estimation of $\phi$ in the Hamiltonian model\label{sec:relphase}}
We first consider the case of estimation of the relative phase shift $\phi$, taking the interaction time (i.e., dynamical phase) $\theta=0$. This value for $\theta$ is not physical, but allows one to isolate the $g$-dependence of the precision of the estimator of $\phi$.   Taking $\theta=0$ and defining 
 \begin{align}
 \ket{\psi(g,\phi)}&:= e^{-iH_{\text{int}}}\ket{0}_{1}\ket{0}_{2} \nonumber \\
 &{}= e^{2g\sin{\phi\over 2}\left( e^{-i{\phi\over 2}}a_{1}^{\dagger}a_{2}^{\dagger} - h.c.\right)}\ket{0}_{1}\ket{0}_{2},
 \label{eqn:thzero}
 \end{align}
  the QFI is then calculated according to
 \begin{align}
 \text{QFI}(\phi)&= -2\del_{\phi'}^{2} \big\vert  \langle \psi(g,\phi)\vert \psi(g,\phi')\rangle \big\vert^{2} \Bigg\vert_{\phi'=\phi} \nonumber \\
 &= {1\over 4}\sinh^{2}\left( 4g\sin^{2}{\phi\over 2} \right) +4g^{2}\cos^{2}{\phi\over 2}
 \label{eqn:hamqfi}
 \end{align}
 which depends non-trivially on $\phi$ because, as a state-valued function of $\phi$, $e^{-iH_{\text{int}}}\ket{0}_{1}\ket{0}_{2}$ does not fall under the shift model form of quantum estimation \cite{holevo}. Although the global maximum occurs for $\phi=\pi$, the values of the relative maxima of the QFI at $\phi=\pi$ and $\phi=0$ are only distinguished at $O(g^{4})$. The $4\times 4$ covariance matrix of the parametrized state is
 \begin{align}
 \Sigma_{\ket{\psi(g,\phi)}}&=  ={1\over 2}\begin{pmatrix}A&B\\
 B&A\end{pmatrix} \nonumber \\
 A&= \cosh\left( 4g\sin{\phi\over 2} \right) \mathbb{I}_{2} \nonumber \\
 B&= \sinh\left( 4g\sin{\phi\over 2} \right)\left( \cos{\phi\over 2}Z - \sin{\phi\over 2}X\right)
 \label{eqn:cov}
 \end{align}
 with Pauli $X$ and $Z$ matrices.  
It is readily verified for the Hamiltonian based model $\ket{\psi(g,\phi)}$ that the total photon number observable $a_{1}^{\dagger}a_{1}+a_{2}^{\dagger}a_{2}$ has a signal-to-noise ratio that satisfies $\text{SN}_{O}(\phi=0)=\text{QFI}(\phi=0)$. However, $\phi=0$ is not a valid operating point of the interferometer for the same reasons that $\theta=0$ is not a valid operating point of the circuit-based model. Further, both the QFI and the total energy scale as $O(g^{2})$ in a neighborhood of $\phi=0$, so that in this domain the maximal precision is the same as a classical optical sensor. Therefore, the main task is to identify an observable with a signal-to-noise ratio scaling with the square of the energy at the optimal working point $\phi=\pi$. The total photon number does not work for this purpose, having zero signal-to-noise ratio at $\phi=\pi$. Similarly, utilizing the covariance matrix (\ref{eqn:cov}), one finds that arbitrary self-adjoint operators quadratic in the canonical observables $q_{1}$, $q_{2}$, $p_{1}$, $p_{2}$ do not serve the desired purpose. Via calculation of the classical Fisher information, one further finds that a homodyne measurement does not contain enough information to saturate the QFI at $\phi=\pi$. 

An observable with the desired scaling of the signal-to-noise ratio is obtained by examining the structure of the symmetric logarthmic derivative (SLD), which at $\phi=\pi$ is given by
\begin{align}
L_{\pi}&=-ia_{1}^{\dagger}a_{1}\ket{\psi(g,\pi)}\bra{\psi(g,\pi)}+i\ket{\psi(g,\pi)}\bra{\psi(g,\pi)}a_{1}^{\dagger}a_{1} \nonumber \\
&=\left( -2g\ket{1}_{1}\ket{1}_{2}\bra{0}_{1}\bra{0}_{2} + h.c. \right) + o(g)
\end{align}where $\sigma a_{1}^{\dagger}a_{1} + (1-\sigma) a_{2}^{\dagger}a_{2}$, $\sigma \in [0,1]$, can be used instead of $a_{1}^{\dagger}a_{1}$. The small $g$ expansion of the SLD \cite{Liu_2023} motivates consideration of the following observable
\begin{equation}
O=\sum_{n=0}^{\infty}n\left( \ket{n+1}\ket{n+1}\bra{n}\bra{n} +h.c.\right)
\label{eqn:oop}
\end{equation}
and its signal-to-noise ratio at fixed $g$
 \begin{equation}
\text{SN}_{O}(\phi)={\left(\del_{\phi}\bra{\psi(g,\phi)} O \ket{\psi(g,\phi)} \right)^{2}\over \text{Var}_{\ket{\psi(g,\phi)}}O}.
\end{equation}
Note that the operator $O$ does not give any signal from the vacuum component from a probe state. This does not affect the readout for the probe states considered in the present work because the vacuum component is independent of $\phi$.

One finds that $\langle \psi(g,\pi)\vert O \vert \psi(g,\pi)\rangle =0$ and that the signal is given by
\begin{align}
&{} \del_{\phi}\langle \psi(g,\phi)\vert O \vert \psi(g,\phi)\rangle= \nonumber \\
&{} \del_{\phi} \left[ {2\sum_{n=0}^{\infty}n\tanh^{2n+1}\left( 2g\sin{\phi\over 2}\right) \cos{\phi\over 2} \over \cosh^{2}\left( 2g\sin{\phi\over 2} \right)} \right]\nonumber \\
&= -\sinh^{2}(2g)\tanh(2g) \text{ at } \phi=\pi.
\end{align}
The noise is given by $\langle \psi(g,\pi)\vert O^{2} \vert \psi(g,\pi)\rangle$. To understand why this expression is $O(1)$ for all $g$, it is useful to expand the square
\begin{align}
O^{2}&=\sum_{n=1}^{\infty}n^{2}\left[ \vphantom{\sum} \ket{n+1}\ket{n+1}\bra{n+1}\bra{n+1} + \ket{n}\ket{n}\bra{n}\bra{n}\vphantom{\sum}\right] \nonumber \\
&{} + \sum_{n=1}^{\infty} \left[ \vphantom{\sum} n(n+1)\ket{n}\ket{n}\bra{n+2}\bra{n+2}  \right. \nonumber \\
&{} \left. + n(n-1)\ket{n+1}\ket{n+1}\bra{n-1}\bra{n-1}\vphantom{\sum}\right].
\label{eqn:osq}
\end{align}
Therefore, a given probe state has two sources of noise in a measurement of $O$: one from the total intensity (first line of (\ref{eqn:osq})) and one from two-photon coherences (second line of (\ref{eqn:osq})). However, for the probe state $\ket{\psi(g,\pi)}$, these come with opposite signs and nearly equal magnitudes. With $p := 1/\cosh ^{2}2g$ defining the geometric distribution $\lbrace p(1-p)^{k}\rbrace_{k=0}^{\infty}$ and $m_{j}$ the $j$-th moment of that distribution, one obtains
\begin{align}
 \langle \psi(g,\pi)\vert O^{2} \vert \psi(g,\pi)\rangle &= p^{2}m_{2} - 2p(1-p)m_{1}+(1-p)^{2}\nonumber \\
&= 1-p = \tanh^{2}2g.
\end{align}
Using (\ref{eqn:hamqfi}), one then obtains that ${\text{SN}_{O}(\pi)\over \text{QFI}(\pi) } \sim 1$ asymptotically with increasing $g$, indicating the asymptotic optimality of the observable $O$ for estimation of relative phase $\phi$.

\section{Estimation of the dynamical phase $\theta$\label{sec:dynphase}}

For the purpose of calculating the QFI with respect to the dynamical phase $\theta$, and thereby the ultimate achievable precision for an estimate $\tilde{\theta}$ of $\theta$ \cite{PhysRevLett.72.3439}, the circuit-based model $V\ket{0}_{1}\ket{0}_{2}$ from (\ref{eqn:ccc}) is equivalent to the simpler state  $e^{i\theta a_{2}^{\dagger}a_{2}}\ket{\psi(g)}$ with $\ket{\psi(g)}=e^{g(a_{1}^{\dagger}a_{2}^{\dagger}-h.c.)}\ket{0}_{1}\ket{0}_{2}$,  because the last circuit element of (\ref{eqn:ccc}) is independent of $\theta$ so cannot change the QFI. Estimation of $\theta$ in the circuit-based description therefore falls into the shift model. A consequence of this fact is that the QFI is a $\theta$-independent constant proportional to the variance of the operator that couples to the parameter:
\begin{align}
\text{QFI}(\theta)&:=4\bra{\psi(g)} (a_{2}^{\dagger}a_{2})^{2} \ket{\psi(g)} - 4\bra{\psi(g)}a_{2}^{\dagger}a_{2}\ket{\psi(g)}^{2} \nonumber \\
&= 4N(g)(N(g)+1)\nonumber \\
&=E(g)^{2}-1
\label{eqn:thqficirc}
\end{align}
with $N(g):= \sinh^{2}g$ corresponding to half the number of expected photons of the system and $E(g)=2N(g)+1$ is the total energy of the system (including vacuum energy $1/2$ per mode). 

 The role of the last circuit element in (\ref{eqn:ccc}) is to allow one to obtain an estimate $\tilde{\theta}$ with error scaling as $E(g)^{-2}$ by measuring the total photon number $a_{1}^{\dagger}a_{1}+a_{2}^{\dagger}a_{2}$ and utilizing method of moments estimation. Specifically, 
it is a main result of \cite{PhysRevA.33.4033} that taking $O=a_{1}^{\dagger}a_{1}+a_{2}^{\dagger}a_{2}$ gives
\begin{equation}
\lim_{\theta\rightarrow 0}\text{SN}_{O}(\theta) = \sinh^{2}2g= 4N(g)(N(g)+1)
\label{eqn:snr}
\end{equation}
confirming that equality in the inequality $\sup_{O}\text{SN}_{O}(\theta) \le \text{QFI}(\theta)$ is saturated by taking $O$ to be the total photon number operator and $\theta\rightarrow 0$ on the left hand side. However, $\theta=0$ is not a valid operating point of the interferometer due to the fact that $V=\mathbb{I}$ at this point, which implies that the output modes are in the vacuum state. In other words, there is no interference at all and the sensitivity per photon diverges.

For the Hamiltonian model, we consider estimation of the dynamical phase $\theta$ at the optimal working point $\phi=\pi$ for estimation of the kinematical phase. The justification for this is that this submanifold is maximally distinguishable from vacuum, as can be seen from the effective nonlinearity $g\sin{\phi\over 2}$. Compared to estimation of the relative phase $\phi$, estimation of $\theta$ is more closely related to the estimation task considered in \cite{PhysRevA.33.4033} because $\theta$ couples the $\mathfrak{su}(1,1)$ generator $K_{3}$ in the Hamiltonian (\ref{eqn:hhhh}). Because 

Therefore, assuming a general $\theta >0$ in (\ref{eqn:hhhh}) and taking $\phi=\pi$, we proceed to analyze optimal estimation of $\theta$. At $\phi=\pi$, the covariance matrix for $\ket{\psi(g,\theta)}:=e^{-iH}\ket{0}_{1}\ket{0}_{2}$  is given by
\begin{align}
\Sigma_{\ket{\psi(g,\theta)}}&={1\over 2}\begin{pmatrix}A&B\\B&A\end{pmatrix}
\label{eqn:cov2}
\end{align}
where in \textbf{Domain 1} defined by $\theta^{2}\ge \lambda^{2}$ (introducing the variables $\lambda:=2g$ and $x=\sqrt{\theta^{2}-\lambda^{2}}$ for convenience)
\begin{align}
A&=  {\theta^{2}-\lambda^{2}\cos 2x\over x^{2}} \mathbb{I}_{2} \nonumber \\
B&= -{2\lambda \theta \over x^{2}}Z\sin^{2}x - {\lambda\over x}X\sin 2x
\label{eqn:rcov}
\end{align}
and in \textbf{Domain 2} defined by $\theta^{2}< \lambda^{2}$ (introducing the variables $\lambda:=2g$ and $x=\sqrt{\lambda^{2}-\theta^{2}}$ for convenience),
\begin{align}
A&=  {-\theta^{2}+\lambda^{2}\cosh 2x\over x^{2}} \mathbb{I}_{2} \nonumber \\
B&= -{2\lambda \theta \over x^{2}}Z\sinh^{2}x - {\lambda\over x}X\sinh 2x.
\label{eqn:rcov}
\end{align}
That the covariance matrices converge in the limit $\theta\rightarrow \lambda$ is readily verified. We use the following formula for the QFI with respect to $\theta$, which can be derived from, e.g., (\ref{eqn:hamqfi}) \cite{PhysRevA.85.010101}
\begin{equation}
\text{QFI}(\theta)={1\over 4}\text{tr}\left[ \left( \Sigma_{\ket{\psi(g,\theta)}}^{-1}\partial_{\theta}\Sigma_{\ket{\psi(g,\theta)}}\right)^{2}\right].
\end{equation}
As one can see from from Fig.\ref{fig:ttt},  the dependence of $\text{QFI}(\theta)$ on $\theta$ in Domain 1 is non-monotonic and in the present work we will only note the observation that although Fig. \ref{fig:ttt} suggests that Heisenberg scaling of $\text{QFI}(\theta)$ is not obtainable in Domain 1, the maximal QFI in Domain 1 scales as $E(g)^{\alpha}$ with $\alpha >1$, indicating that operating the $SU(1,1)$ interferometer in this parameter domain provides an advantage in estimation precision compared to coherent states of the same energy. Instead, we focus on Domain 2, and specifically the optimal operating point at $\theta = 0$. We find that $\text{QFI}(\theta)$ obeys the following scaling as $g\rightarrow \infty$, where the energy $E(g):={1\over 2}\text{tr}\Sigma_{\ket{\psi(g,0)}}$,
\begin{equation}
\text{QFI}(\theta)\big\vert_{\theta=0} \sim 4\left( {E(g)\over \ln 2E(g) }\right)^{2}.
\label{eqn:qfidyn}
\end{equation}
This scaling implies, through the quantum Cram\'{e}r-Rao bound, a logarithmically-modified Heisenberg scaling for the precision of an optimal unbiased estimator of $\theta$.

 \begin{figure}[t]
    \centering
    \includegraphics[scale=0.8]{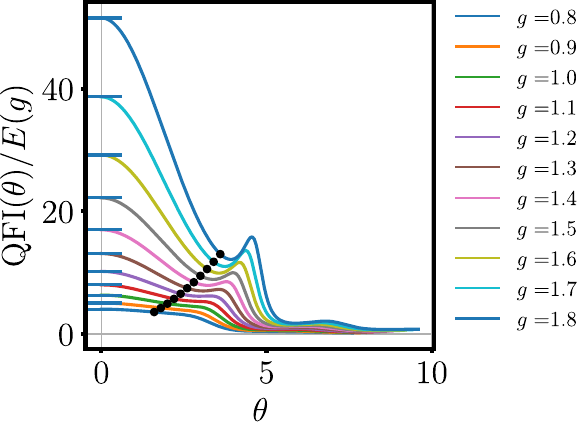}
    \caption{$\text{QFI}(\theta)$ relative to the total energy on the dynamical phase domain $\theta = [0,2g + 6]$  for various $g$. Black dots show the boundary of Domain 1 and Domain 2 at $\theta = 2g$, and blue horizontal lines show the relative QFI corresponding to the analytical result  $\text{QFI}(\theta=0)={\sinh^{4}2g\over g^{2}}$. Analogous data for (\ref{eqn:thqficirc}) are not shown because the $\text{QFI}(\theta)$ is independent of $\theta$ and $E(g)=1$ at $\theta=0$.}
    \label{fig:ttt}
\end{figure}

One of the main results of the analysis of the circuit model (\ref{eqn:ccc}) (or (\ref{eqn:ykyk}) in Ref.\cite{PhysRevA.33.4033}) is the $\theta = 0$ operating point that is characterized by Heisenberg scaling of the precision of an estimator $\tilde{\theta}$ obtained from a total photon number measurement (see (\ref{eqn:snr}) above). At the operating point $\theta=0$, there are two major distinctions to be made between the circuit models and the Hamiltonian model. The first is that for any value of $g$, the $\theta=0$ operating point defines a trivial quantum circuit, whereas the $\theta=0$ operating point of the Hamiltonian model does not define a trivial dynamics. The second distinction is that for the Hamiltonian model, the total photon number observable $O=a_{1}^{\dagger}a_{1}+a_{2}^{\dagger}a_{2}$ does not give a value of $\lim_{\theta\rightarrow 0}\text{SN}_{O}(\theta)$ that exhibits Heisenberg scaling, unlike the result in (\ref{eqn:snr}) above. A more surprising fact is that un-squeezing the total photon number operator $O$ to form an observable $O_{\text{sq}}$ such that $\langle O_{\text{sq}}\rangle_{e^{-iH}\ket{0}\ket{0}} = \langle O\rangle_{\ket{\xi(g,\theta)}}$ where $\ket{\xi(g,\theta)}:=e^{2ig(a_{1}^{\dagger}a_{2}^{\dagger}+h.c.)}e^{-iH}\ket{0}\ket{0}$ also does not result in a signal-to-noise ratio exhibiting Heisenberg scaling. This indicates that even the intuition behind using the total photon number readout in \cite{PhysRevA.33.4033} is not valid for the Hamiltonian model. 

An approach that yields a satisfactory readout is based on the following Lemma
\begin{lem}\label{lem:lll1} Let $\lambda^{2}>\theta^{2}$. Then, as $\lambda\rightarrow \infty$,
\begin{equation}
e^{-iH}\ket{0}\ket{0} - e^{\sqrt{\lambda^{2}-\theta^{2}}\left( e^{i(\pi + \cos^{-1}{\theta\over \lambda})} a_{1}^{\dagger}a_{2}^{\dagger} - h.c.\right)}\ket{0}\ket{0}
\label{eqn:hjhj}
\end{equation}
converges to zero in $\ell^{2}(\mathbb{C})^{\otimes 2}$, where the range  $[{\pi\over 2},\pi]$ is taken for $\cos^{-1}$.
\end{lem}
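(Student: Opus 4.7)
The plan is to cast both vectors in (\ref{eqn:hjhj}) into the same explicit normal-ordered form on the two-mode Fock space and then check that their parameters agree asymptotically as $\lambda\to\infty$. Because $-iH$ lies in the $\mathfrak{sp}(2,\mathbb{R})$ subalgebra spanned by $K_{1},K_{2},K_{3}$, the unitary $e^{-iH}$ admits a Gauss decomposition $e^{AK_{+}}e^{BK_{3}}e^{CK_{-}}$ with $K_{\pm}=K_{1}\pm iK_{2}$. I would obtain $A$ and $B$ by passing to the defining $2\times 2$ representation used in the paper, in which $-iH\mapsto -i\theta Z-\lambda X$. Since $(-i\theta Z-\lambda X)^{2}=(\lambda^{2}-\theta^{2})I=x^{2}I$, the corresponding $Sp(2,\mathbb{R})$ matrix equals $\cosh(x)I+(\sinh x/x)(-i\theta Z-\lambda X)$; matching entries against the upper--diagonal--lower product gives
\begin{equation*}
A=\frac{-i\lambda\tanh x/x}{1+i\theta\tanh x/x},\qquad e^{B/2}=\frac{1}{\cosh x+i\theta\sinh x/x}.
\end{equation*}
Acting on $\ket{0}\ket{0}$ and using $K_{-}\ket{0}\ket{0}=0$, $K_{3}\ket{0}\ket{0}=\tfrac{1}{2}\ket{0}\ket{0}$, and $(a_{1}^{\dagger}a_{2}^{\dagger})^{n}\ket{0}\ket{0}=n!\ket{n}\ket{n}$ yields $e^{-iH}\ket{0}\ket{0}=e^{B/2}\sum_{n\ge 0}A^{n}\ket{n}\ket{n}$.

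For the second vector in (\ref{eqn:hjhj}), I would apply the Perelomov formula for $SU(1,1)$ coherent states in the $k=1/2$ discrete-series representation on the $a_{1}^{\dagger}a_{1}-a_{2}^{\dagger}a_{2}=0$ Fock subspace, obtaining $(\cosh x)^{-1}\sum_{n\ge 0}\zeta^{n}\ket{n}\ket{n}$ with $\zeta=e^{i\varphi}\tanh x$. The branch convention specified in the lemma makes $e^{i\varphi}=-(\theta+ix)/\lambda$, hence $\zeta=-(\theta+ix)\tanh x/\lambda$.

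With both vectors in the form $c\sum_{n}(\cdot)^{n}\ket{n}\ket{n}$, the claimed $\ell^{2}$ convergence is equivalent to the overlap $e^{B/2}(\cosh x)^{-1}/(1-\bar\zeta A)\to 1$. A direct computation gives the normalization identities $|e^{B/2}|^{2}=1-|A|^{2}=(\cosh x)^{-2}/(1+\theta^{2}\tanh^{2}x/x^{2})$ and $1-|\zeta|^{2}=(\cosh x)^{-2}$. Clearing denominators in $A-\zeta$ using $\lambda^{2}=x^{2}+\theta^{2}$, and collecting leading contributions in $1-\tanh x\sim 2e^{-2x}$ and $(\cosh x)^{-2}\sim 4e^{-2x}$, yields the asymptotic $|A-\zeta|\sim 2|\theta|e^{-2x}/\lambda$. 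Since $1-|A|^{2}\sim 4e^{-2x}$, the dimensionless ratio $|A-\zeta|/(1-|A|^{2})\sim|\theta|/(2\lambda)\to 0$, which forces $1-\bar\zeta A=(1-|A|^{2})(1+o(1))$ in both magnitude and phase. Combined with $\arg e^{B/2}=-\arctan(\theta\tanh x/x)\to 0$ and $(1-|\zeta|^{2})/(1-|A|^{2})\to 1$, the overlap tends to $1$, so the squared norm $2-2\real(\text{overlap})$ of the difference in (\ref{eqn:hjhj}) tends to $0$.

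The delicate step is the last one. Because both $|A|$ and $|\zeta|$ tend to $1$, the photon number diverges and the two states become weakly null in $\ell^{2}(\mathbb{C})^{\otimes 2}$, so matching the Domain 2 covariance matrices (\ref{eqn:rcov}) to those of the target squeezer, which is an easy substitution, is \emph{not} enough to prove the lemma. What is needed is the sharp exponential bound $|A-\zeta|=O(e^{-2x}/\lambda)$, much better than the $O(1/\lambda)$ one would extract from covariance matrices alone; only this scale separation against $1-|A|^{2}\sim e^{-2x}$ forces the overlap ratio to unity. This is the one place where the algebra demands care; everything else reduces to bookkeeping with the Gauss decomposition.
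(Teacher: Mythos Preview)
Your argument is correct and proceeds differently from the paper's. The paper first observes that $e^{-iH}\ket{0}\ket{0}$, being a pure Gaussian state with zero mean, is \emph{exactly} a two-mode squeezed vacuum $e^{\zeta a_{1}^{\dagger}a_{2}^{\dagger}-\bar\zeta a_{1}a_{2}}\ket{0}\ket{0}$ for some $\zeta=fe^{i\xi}$; it extracts $f,\xi$ by matching the Domain~2 covariance matrix~(\ref{eqn:rcov}), records the limits $f\to\sqrt{\lambda^{2}-\theta^{2}}$ and $\xi\to\pi+\cos^{-1}(\theta/\lambda)$, and stops. Your Gauss decomposition plus explicit overlap computation reaches the same identification but supplies what the paper leaves unsaid: the rate estimate $|A-\zeta|=O(e^{-2x}/\lambda)$ against $1-|A|^{2}\sim 4e^{-2x}$, which is what actually forces the overlap to~$1$. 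Your closing paragraph is on point---since both squeezing magnitudes diverge, convergence of the generator parameters $f,\xi$ alone does not yield $\ell^{2}$ convergence of the states, and the covariance-matrix route does not by itself exhibit the required scale separation. Both approaches rest on the same structural fact (the state is a two-mode squeezed vacuum at every $\lambda$); yours is the one that explicitly closes the $\ell^{2}$ argument.
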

\begin{proof}
We first show that $e^{-iH}\ket{0}\ket{0}$ is actually a two mode squeezed state $e^{\zeta a_{1}^{\dagger}a_{2}^{\dagger} - \overline{\zeta}a_{1}a_{2}}\ket{0}\ket{0}$ with appropriate $\zeta\in\mathbb{C}$. Consider (\ref{eqn:cov2}), (\ref{eqn:rcov}) defining the covariance matrix of $e^{-iH}\ket{0}\ket{0}$ in Domain 2 at $\phi=\pi$ and define the parameters
\begin{align}
f&:= {1\over 2}\cosh^{-1}\left( { - \theta^{2}+\lambda^{2}\cosh 2\sqrt{\lambda^{2} - \theta^{2}}  \over \lambda^{2} - \theta^{2}} \right) \nonumber \\
\xi&:= \pi +\cos^{-1}\left( \theta \sin\sqrt{\lambda^{2} - \theta^{2}} \over \sqrt{- \theta^{2}+\lambda^{2}\cosh^{2}\sqrt{\lambda^{2} - \theta^{2}} }  \right)
\end{align}
where $f\rightarrow {1\over 2}\cosh^{-1}(1+2\lambda^{2})$ and $\xi\rightarrow \pi$ as $\lambda \rightarrow \theta^{+}$ and $f\rightarrow \sqrt{\lambda^{2}-\theta^{2}}$ and $\xi\rightarrow \pi + \cos^{-1}{\theta\over\lambda}$ as $\lambda \rightarrow \infty$. Taking $\zeta:= fe^{i\xi}$, one finds that the covariance matrices of $e^{-iH}\ket{0}\ket{0}$ and $e^{\zeta a_{1}^{\dagger}a_{2}^{\dagger} - \overline{\zeta}a_{1}a_{2}}\ket{0}\ket{0}$ agree. Since these states also have equal displacement in phase space (namely, zero displacement), they are equal as states. The $\lambda \rightarrow \infty$ asymptotics of $f$ and $\xi$ give the element of $\mathfrak{su}(1,1)$ in the exponent of (\ref{eqn:hjhj}).
\end{proof}
As a function of $\theta$, the large $\lambda$ approximation to the Hamiltonian model obtained in this Lemma is similar in structure to the Hamiltonian model as a function of the relative phase $\phi$ at $\theta = 0$ in (\ref{eqn:thzero}). In fact, taking $\sin {\phi\over 2} = \sqrt{1-{\theta^{2}\over \lambda^{2}}}$ in (\ref{eqn:thzero}) then mapping $\phi \mapsto -\phi$, one finds due to Lemma \ref{lem:lll1} that $e^{-iH}\ket{0}\ket{0}$ converges to (\ref{eqn:thzero}). This fact can be interpreted as a duality between the relative phase and the dynamic phase.  In Section \ref{sec:relphase} we found that measurement of the observable $O$ in (\ref{eqn:oop}) is optimal for estimation of relative phase $\phi$ near the optimal operating point $\phi = \pi$, which corresponds to the $\lambda \rightarrow \infty$ limit of Domain 2. Calculating the signal-to-noise ratio for readout $O$ in the large $\lambda$ approximation to $e^{-iH}\ket{0}\ket{0}$ from Lemma \ref{lem:lll1}, one obtains $\lim_{\theta\rightarrow 0}\text{SN}_{O}(\theta)  = {\sinh^{4}2g \over g^{2}}$, saturating the QFI. This establishes (\ref{eqn:oop}) as an asymptotically optimal readout for the dynamical phase for $\lambda^{2} \gg \theta^{2}$ and $\theta \rightarrow 0$.

\section{Discussion}
In our comparison of a circuit-based model and Hamiltonian model for the $SU(1,1)$ interferometer, we have fixed the parameter $\theta$ as the symbol that couples to the $\mathfrak{su}(1,1)$ generator $K_{3}$. However, unlike the Hamiltonian model, we do not see that the circuit-based model contains any meaningful differentiation between a dynamical phase corresponding to interaction time and the relative phase that corresponds to a kinematical phase difference between the modes. For instance, one may argue that the appropriate phase coupling to $K_{3}$ in the circuit model is some non-dynamical relative phase $\phi$, in which case it is clear that at $\phi=0$ both the circuit-based model and the Hamiltonian based model leave the vacuum invariant. If this interpretation is assumed, then the results of Section \ref{sec:relphase} imply different optimal operating points for the two models ($\phi=0$ for the circuit model and $\phi=\pi$ for the Hamiltonian model), with nontrivial dynamics occurring only for the latter model. Alternatively, one could replace the generators of the optically active circuit elements in (\ref{eqn:su11ccc}) and (\ref{eqn:ykyk}) by rotated generators in the span of $K_{1}, K_{2}\in \mathfrak{su}(1,1)$ and refer to the rotation phase as the analogue of $\phi$, although the problem of justifying the description by a Hamiltonian quadratic in the canonical operators remains. By contrast, the formulation of the Hamiltonian model and the different scalings of the QFI in (\ref{eqn:thqficirc}) and (\ref{eqn:qfidyn}) show that the phases $\phi$ and $\theta$ are physically distinct, arising from the overlap of optical modes and the interaction time, respectively. Recent proposals for embedded passive interferometers \cite{PhysRevLett.128.033601,agarwal2025}

It is an intriguing fact that despite the suggestive similarity between circuit-based optical interferometers and the circuit description of quantum computation, recent proposals for optical quantum computation utilize the measurement-based setting for quantum computation instead of the circuit-based setting. The reason is that CV circuit elements corresponding to optically active dynamics are noisy and difficult to implement as in-line operations, so a better strategy is to prepare a non-classical resource state (e.g., a cluster state) at the outset and manipulate its stored quantum information. The Hamiltonian description of the $SU(1,1)$ interferometer can be generalized to describe non-classical Gaussian CV state generation by aligning multiple downconversion processes, which could inform new approaches for generating CV cluster states.

Lastly, we note that we have not analyzed estimation of the nonlinearity $g$ in the Hamiltonian model because it should be considered simultaneously with the problem of estimation of the renormalized nonlinearity $g\sin{\phi\over 2}$, which is a (nonlinear) multiparameter sensing problem. The full task of optimal $\mathfrak{su}(1,1)$ Hamiltonian estimation is the subject of intensive research, similar to its $\mathfrak{su}(2)$ counterpart \cite{ogv,du2025characterizingresourcesmultiparameterestimation}.

\onecolumngrid

\begin{acknowledgements} 
The author thanks Alberto Marino for insights into the experimental operation of an $SU(1,1)$ interferometer and Carlton Caves for technical discussions. This work was supported by the National Quantum Information Science Research Centers, the Quantum Science Center, the ASC Beyond Moore’s Law project, and the Laboratory Directed Research and Development program of Los Alamos National Laboratory.

\end{acknowledgements}

\bibliography{refs}

\begin{thebibliography}{50}%
\makeatletter
\providecommand \@ifxundefined [1]{%
 \@ifx{#1\undefined}
}%
\providecommand \@ifnum [1]{%
 \ifnum #1\expandafter \@firstoftwo
 \else \expandafter \@secondoftwo
 \fi
}%
\providecommand \@ifx [1]{%
 \ifx #1\expandafter \@firstoftwo
 \else \expandafter \@secondoftwo
 \fi
}%
\providecommand \natexlab [1]{#1}%
\providecommand \enquote  [1]{``#1''}%
\providecommand \bibnamefont  [1]{#1}%
\providecommand \bibfnamefont [1]{#1}%
\providecommand \citenamefont [1]{#1}%
\providecommand \href@noop [0]{\@secondoftwo}%
\providecommand \href [0]{\begingroup \@sanitize@url \@href}%
\providecommand \@href[1]{\@@startlink{#1}\@@href}%
\providecommand \@@href[1]{\endgroup#1\@@endlink}%
\providecommand \@sanitize@url [0]{\catcode `\\12\catcode `\$12\catcode
  `\&12\catcode `\#12\catcode `\^12\catcode `\_12\catcode `\%12\relax}%
\providecommand \@@startlink[1]{}%
\providecommand \@@endlink[0]{}%
\providecommand \url  [0]{\begingroup\@sanitize@url \@url }%
\providecommand \@url [1]{\endgroup\@href {#1}{\urlprefix }}%
\providecommand \urlprefix  [0]{URL }%
\providecommand \Eprint [0]{\href }%
\providecommand \doibase [0]{https://doi.org/}%
\providecommand \selectlanguage [0]{\@gobble}%
\providecommand \bibinfo  [0]{\@secondoftwo}%
\providecommand \bibfield  [0]{\@secondoftwo}%
\providecommand \translation [1]{[#1]}%
\providecommand \BibitemOpen [0]{}%
\providecommand \bibitemStop [0]{}%
\providecommand \bibitemNoStop [0]{.\EOS\space}%
\providecommand \EOS [0]{\spacefactor3000\relax}%
\providecommand \BibitemShut  [1]{\csname bibitem#1\endcsname}%
\let\auto@bib@innerbib\@empty
\bibitem [{\citenamefont {Yurke}\ \emph {et~al.}(1986)\citenamefont {Yurke},
  \citenamefont {McCall},\ and\ \citenamefont {Klauder}}]{PhysRevA.33.4033}%
  \BibitemOpen
  \bibfield  {author} {\bibinfo {author} {\bibfnamefont {B.}~\bibnamefont
  {Yurke}}, \bibinfo {author} {\bibfnamefont {S.~L.}\ \bibnamefont {McCall}},\
  and\ \bibinfo {author} {\bibfnamefont {J.~R.}\ \bibnamefont {Klauder}},\
  }\bibfield  {title} {\bibinfo {title} {{SU(2) and SU(1,1) interferometers}},\
  }\href {https://doi.org/10.1103/PhysRevA.33.4033} {\bibfield  {journal}
  {\bibinfo  {journal} {Phys. Rev. A}\ }\textbf {\bibinfo {volume} {33}},\
  \bibinfo {pages} {4033} (\bibinfo {year} {1986})}\BibitemShut {NoStop}%
\bibitem [{\citenamefont {Mandel}\ and\ \citenamefont
  {Wolf}(1995)}]{mandelwolf}%
  \BibitemOpen
  \bibfield  {author} {\bibinfo {author} {\bibfnamefont {L.}~\bibnamefont
  {Mandel}}\ and\ \bibinfo {author} {\bibfnamefont {E.}~\bibnamefont {Wolf}},\
  }\href@noop {} {\emph {\bibinfo {title} {Optical coherence and quantum
  optics}}}\ (\bibinfo  {publisher} {Cambridge University Press},\ \bibinfo
  {year} {1995})\BibitemShut {NoStop}%
\bibitem [{\citenamefont {Manceau}\ \emph {et~al.}(2017)\citenamefont
  {Manceau}, \citenamefont {Leuchs}, \citenamefont {Khalili},\ and\
  \citenamefont {Chekhova}}]{PhysRevLett.119.223604}%
  \BibitemOpen
  \bibfield  {author} {\bibinfo {author} {\bibfnamefont {M.}~\bibnamefont
  {Manceau}}, \bibinfo {author} {\bibfnamefont {G.}~\bibnamefont {Leuchs}},
  \bibinfo {author} {\bibfnamefont {F.}~\bibnamefont {Khalili}},\ and\ \bibinfo
  {author} {\bibfnamefont {M.}~\bibnamefont {Chekhova}},\ }\bibfield  {title}
  {\bibinfo {title} {{Detection Loss Tolerant Supersensitive Phase Measurement
  with an $SU(1,1)$ Interferometer}},\ }\href
  {https://doi.org/10.1103/PhysRevLett.119.223604} {\bibfield  {journal}
  {\bibinfo  {journal} {Phys. Rev. Lett.}\ }\textbf {\bibinfo {volume} {119}},\
  \bibinfo {pages} {223604} (\bibinfo {year} {2017})}\BibitemShut {NoStop}%
\bibitem [{\citenamefont {Giese}\ \emph {et~al.}(2017)\citenamefont {Giese},
  \citenamefont {Lemieux}, \citenamefont {Manceau}, \citenamefont {Fickler},\
  and\ \citenamefont {Boyd}}]{PhysRevA.96.053863}%
  \BibitemOpen
  \bibfield  {author} {\bibinfo {author} {\bibfnamefont {E.}~\bibnamefont
  {Giese}}, \bibinfo {author} {\bibfnamefont {S.}~\bibnamefont {Lemieux}},
  \bibinfo {author} {\bibfnamefont {M.}~\bibnamefont {Manceau}}, \bibinfo
  {author} {\bibfnamefont {R.}~\bibnamefont {Fickler}},\ and\ \bibinfo {author}
  {\bibfnamefont {R.~W.}\ \bibnamefont {Boyd}},\ }\bibfield  {title} {\bibinfo
  {title} {Phase sensitivity of gain-unbalanced nonlinear interferometers},\
  }\href {https://doi.org/10.1103/PhysRevA.96.053863} {\bibfield  {journal}
  {\bibinfo  {journal} {Phys. Rev. A}\ }\textbf {\bibinfo {volume} {96}},\
  \bibinfo {pages} {053863} (\bibinfo {year} {2017})}\BibitemShut {NoStop}%
\bibitem [{\citenamefont {Yao}\ \emph {et~al.}(2024)\citenamefont {Yao},
  \citenamefont {Zheng}, \citenamefont {Zhang}, \citenamefont {Tian},
  \citenamefont {Shen},\ and\ \citenamefont {Zheng}}]{10.1063/5.0207332}%
  \BibitemOpen
  \bibfield  {author} {\bibinfo {author} {\bibfnamefont {W.}~\bibnamefont
  {Yao}}, \bibinfo {author} {\bibfnamefont {L.-A.}\ \bibnamefont {Zheng}},
  \bibinfo {author} {\bibfnamefont {X.}~\bibnamefont {Zhang}}, \bibinfo
  {author} {\bibfnamefont {L.}~\bibnamefont {Tian}}, \bibinfo {author}
  {\bibfnamefont {H.}~\bibnamefont {Shen}},\ and\ \bibinfo {author}
  {\bibfnamefont {Y.}~\bibnamefont {Zheng}},\ }\bibfield  {title} {\bibinfo
  {title} {Experimental implementation of time reversal in an optical domain},\
  }\href {https://doi.org/10.1063/5.0207332} {\bibfield  {journal} {\bibinfo
  {journal} {Applied Physics Letters}\ }\textbf {\bibinfo {volume} {125}},\
  \bibinfo {pages} {164005} (\bibinfo {year} {2024})}\BibitemShut {NoStop}%
\bibitem [{\citenamefont {Tian}\ \emph {et~al.}(2023)\citenamefont {Tian},
  \citenamefont {Yao}, \citenamefont {Wu}, \citenamefont {Wang}, \citenamefont
  {Shen}, \citenamefont {Zheng},\ and\ \citenamefont {Peng}}]{Tian:23}%
  \BibitemOpen
  \bibfield  {author} {\bibinfo {author} {\bibfnamefont {L.}~\bibnamefont
  {Tian}}, \bibinfo {author} {\bibfnamefont {W.}~\bibnamefont {Yao}}, \bibinfo
  {author} {\bibfnamefont {Y.}~\bibnamefont {Wu}}, \bibinfo {author}
  {\bibfnamefont {Q.}~\bibnamefont {Wang}}, \bibinfo {author} {\bibfnamefont
  {H.}~\bibnamefont {Shen}}, \bibinfo {author} {\bibfnamefont {Y.}~\bibnamefont
  {Zheng}},\ and\ \bibinfo {author} {\bibfnamefont {K.}~\bibnamefont {Peng}},\
  }\bibfield  {title} {\bibinfo {title} {Loss-tolerant and quantum-enhanced
  interferometer by reversed squeezing processes},\ }\href
  {https://doi.org/10.1364/OL.487355} {\bibfield  {journal} {\bibinfo
  {journal} {Opt. Lett.}\ }\textbf {\bibinfo {volume} {48}},\ \bibinfo {pages}
  {3909} (\bibinfo {year} {2023})}\BibitemShut {NoStop}%
\bibitem [{\citenamefont {Corgier}\ \emph {et~al.}(2021)\citenamefont
  {Corgier}, \citenamefont {Gaaloul}, \citenamefont {Smerzi},\ and\
  \citenamefont {Pezz\`e}}]{PhysRevLett.127.183401}%
  \BibitemOpen
  \bibfield  {author} {\bibinfo {author} {\bibfnamefont {R.}~\bibnamefont
  {Corgier}}, \bibinfo {author} {\bibfnamefont {N.}~\bibnamefont {Gaaloul}},
  \bibinfo {author} {\bibfnamefont {A.}~\bibnamefont {Smerzi}},\ and\ \bibinfo
  {author} {\bibfnamefont {L.}~\bibnamefont {Pezz\`e}},\ }\bibfield  {title}
  {\bibinfo {title} {Delta-kick squeezing},\ }\href
  {https://doi.org/10.1103/PhysRevLett.127.183401} {\bibfield  {journal}
  {\bibinfo  {journal} {Phys. Rev. Lett.}\ }\textbf {\bibinfo {volume} {127}},\
  \bibinfo {pages} {183401} (\bibinfo {year} {2021})}\BibitemShut {NoStop}%
\bibitem [{\citenamefont {Mao}\ \emph {et~al.}(2023)\citenamefont {Mao},
  \citenamefont {Liu}, \citenamefont {Li}, \citenamefont {Cao}, \citenamefont
  {Chen}, \citenamefont {Xu}, \citenamefont {Tey}, \citenamefont {Huang},\ and\
  \citenamefont {You}}]{Mao2023}%
  \BibitemOpen
  \bibfield  {author} {\bibinfo {author} {\bibfnamefont {T.-W.}\ \bibnamefont
  {Mao}}, \bibinfo {author} {\bibfnamefont {Q.}~\bibnamefont {Liu}}, \bibinfo
  {author} {\bibfnamefont {X.-W.}\ \bibnamefont {Li}}, \bibinfo {author}
  {\bibfnamefont {J.-H.}\ \bibnamefont {Cao}}, \bibinfo {author} {\bibfnamefont
  {F.}~\bibnamefont {Chen}}, \bibinfo {author} {\bibfnamefont {W.-X.}\
  \bibnamefont {Xu}}, \bibinfo {author} {\bibfnamefont {M.~K.}\ \bibnamefont
  {Tey}}, \bibinfo {author} {\bibfnamefont {Y.-X.}\ \bibnamefont {Huang}},\
  and\ \bibinfo {author} {\bibfnamefont {L.}~\bibnamefont {You}},\ }\bibfield
  {title} {\bibinfo {title} {{Quantum-enhanced sensing by echoing spin-nematic
  squeezing in atomic Bose-Einstein condensate}},\ }\href
  {https://doi.org/10.1038/s41567-023-02168-3} {\bibfield  {journal} {\bibinfo
  {journal} {Nature Physics}\ }\textbf {\bibinfo {volume} {19}},\ \bibinfo
  {pages} {1585} (\bibinfo {year} {2023})}\BibitemShut {NoStop}%
\bibitem [{\citenamefont {Davis}\ \emph {et~al.}(2016)\citenamefont {Davis},
  \citenamefont {Bentsen},\ and\ \citenamefont
  {Schleier-Smith}}]{PhysRevLett.116.053601}%
  \BibitemOpen
  \bibfield  {author} {\bibinfo {author} {\bibfnamefont {E.}~\bibnamefont
  {Davis}}, \bibinfo {author} {\bibfnamefont {G.}~\bibnamefont {Bentsen}},\
  and\ \bibinfo {author} {\bibfnamefont {M.}~\bibnamefont {Schleier-Smith}},\
  }\bibfield  {title} {\bibinfo {title} {{Approaching the Heisenberg Limit
  without Single-Particle Detection}},\ }\href
  {https://doi.org/10.1103/PhysRevLett.116.053601} {\bibfield  {journal}
  {\bibinfo  {journal} {Phys. Rev. Lett.}\ }\textbf {\bibinfo {volume} {116}},\
  \bibinfo {pages} {053601} (\bibinfo {year} {2016})}\BibitemShut {NoStop}%
\bibitem [{\citenamefont {Kobrin}\ \emph {et~al.}(2024)\citenamefont {Kobrin},
  \citenamefont {Schuster}, \citenamefont {Block}, \citenamefont {Wu},
  \citenamefont {Mitchell}, \citenamefont {Davis},\ and\ \citenamefont
  {Yao}}]{yao}%
  \BibitemOpen
  \bibfield  {author} {\bibinfo {author} {\bibfnamefont {B.}~\bibnamefont
  {Kobrin}}, \bibinfo {author} {\bibfnamefont {T.}~\bibnamefont {Schuster}},
  \bibinfo {author} {\bibfnamefont {M.}~\bibnamefont {Block}}, \bibinfo
  {author} {\bibfnamefont {W.}~\bibnamefont {Wu}}, \bibinfo {author}
  {\bibfnamefont {B.}~\bibnamefont {Mitchell}}, \bibinfo {author}
  {\bibfnamefont {E.}~\bibnamefont {Davis}},\ and\ \bibinfo {author}
  {\bibfnamefont {N.~Y.}\ \bibnamefont {Yao}},\ }\href@noop {} {\bibinfo
  {title} {A universal protocol for quantum-enhanced sensing via information
  scrambling}} (\bibinfo {year} {2024}),\ \Eprint
  {https://arxiv.org/abs/2411.12794} {arXiv:2411.12794 [quant-ph]} \BibitemShut
  {NoStop}%
\bibitem [{\citenamefont {Chen}\ and\ \citenamefont
  {Jing}(2024)}]{PhysRevA.110.062425}%
  \BibitemOpen
  \bibfield  {author} {\bibinfo {author} {\bibfnamefont {P.}~\bibnamefont
  {Chen}}\ and\ \bibinfo {author} {\bibfnamefont {J.}~\bibnamefont {Jing}},\
  }\bibfield  {title} {\bibinfo {title} {Qubit-assisted quantum metrology under
  a time-reversal strategy},\ }\href
  {https://doi.org/10.1103/PhysRevA.110.062425} {\bibfield  {journal} {\bibinfo
   {journal} {Phys. Rev. A}\ }\textbf {\bibinfo {volume} {110}},\ \bibinfo
  {pages} {062425} (\bibinfo {year} {2024})}\BibitemShut {NoStop}%
\bibitem [{\citenamefont {Kre\ifmmode \check{s}\else
  \v{s}\fi{}i\ifmmode~\acute{c}\else \'{c}\fi{}}\ and\ \citenamefont
  {Ackemann}(2023)}]{PhysRevA.108.043302}%
  \BibitemOpen
  \bibfield  {author} {\bibinfo {author} {\bibfnamefont {I.}~\bibnamefont
  {Kre\ifmmode \check{s}\else \v{s}\fi{}i\ifmmode~\acute{c}\else \'{c}\fi{}}}\
  and\ \bibinfo {author} {\bibfnamefont {T.}~\bibnamefont {Ackemann}},\
  }\bibfield  {title} {\bibinfo {title} {{Quantum enhanced $SU(1,1)$
  matter-wave interferometry in a ring cavity}},\ }\href
  {https://doi.org/10.1103/PhysRevA.108.043302} {\bibfield  {journal} {\bibinfo
   {journal} {Phys. Rev. A}\ }\textbf {\bibinfo {volume} {108}},\ \bibinfo
  {pages} {043302} (\bibinfo {year} {2023})}\BibitemShut {NoStop}%
\bibitem [{\citenamefont {Burd}\ \emph {et~al.}(2019)\citenamefont {Burd},
  \citenamefont {Srinivas}, \citenamefont {Bollinger}, \citenamefont {Wilson},
  \citenamefont {Wineland}, \citenamefont {Leibfried}, \citenamefont
  {Slichter},\ and\ \citenamefont {Allcock}}]{wineland}%
  \BibitemOpen
  \bibfield  {author} {\bibinfo {author} {\bibfnamefont {S.~C.}\ \bibnamefont
  {Burd}}, \bibinfo {author} {\bibfnamefont {R.}~\bibnamefont {Srinivas}},
  \bibinfo {author} {\bibfnamefont {J.~J.}\ \bibnamefont {Bollinger}}, \bibinfo
  {author} {\bibfnamefont {A.~C.}\ \bibnamefont {Wilson}}, \bibinfo {author}
  {\bibfnamefont {D.~J.}\ \bibnamefont {Wineland}}, \bibinfo {author}
  {\bibfnamefont {D.}~\bibnamefont {Leibfried}}, \bibinfo {author}
  {\bibfnamefont {D.~H.}\ \bibnamefont {Slichter}},\ and\ \bibinfo {author}
  {\bibfnamefont {D.~T.~C.}\ \bibnamefont {Allcock}},\ }\bibfield  {title}
  {\bibinfo {title} {Quantum amplification of mechanical oscillator motion},\
  }\href {https://doi.org/10.1126/science.aaw2884} {\bibfield  {journal}
  {\bibinfo  {journal} {Science}\ }\textbf {\bibinfo {volume} {364}},\ \bibinfo
  {pages} {1163} (\bibinfo {year} {2019})}\BibitemShut {NoStop}%
\bibitem [{\citenamefont {Knill}\ \emph {et~al.}(2001)\citenamefont {Knill},
  \citenamefont {Laflamme},\ and\ \citenamefont {Milburn}}]{Knill2001}%
  \BibitemOpen
  \bibfield  {author} {\bibinfo {author} {\bibfnamefont {E.}~\bibnamefont
  {Knill}}, \bibinfo {author} {\bibfnamefont {R.}~\bibnamefont {Laflamme}},\
  and\ \bibinfo {author} {\bibfnamefont {G.~J.}\ \bibnamefont {Milburn}},\
  }\bibfield  {title} {\bibinfo {title} {A scheme for efficient quantum
  computation with linear optics},\ }\href {https://doi.org/10.1038/35051009}
  {\bibfield  {journal} {\bibinfo  {journal} {Nature}\ }\textbf {\bibinfo
  {volume} {409}},\ \bibinfo {pages} {46} (\bibinfo {year} {2001})}\BibitemShut
  {NoStop}%
\bibitem [{\citenamefont {Cleve}\ \emph {et~al.}(1998)\citenamefont {Cleve},
  \citenamefont {Ekert}, \citenamefont {Macchiavello},\ and\ \citenamefont
  {Mosca}}]{mosca}%
  \BibitemOpen
  \bibfield  {author} {\bibinfo {author} {\bibfnamefont {R.}~\bibnamefont
  {Cleve}}, \bibinfo {author} {\bibfnamefont {A.}~\bibnamefont {Ekert}},
  \bibinfo {author} {\bibfnamefont {C.}~\bibnamefont {Macchiavello}},\ and\
  \bibinfo {author} {\bibfnamefont {M.}~\bibnamefont {Mosca}},\ }\bibfield
  {title} {\bibinfo {title} {Quantum algorithms revisited},\ }\href
  {https://doi.org/10.1098/rspa.1998.0164} {\bibfield  {journal} {\bibinfo
  {journal} {Proc. R. Soc. Lond. A.}\ }\textbf {\bibinfo {volume} {454}},\
  \bibinfo {pages} {339} (\bibinfo {year} {1998})}\BibitemShut {NoStop}%
\bibitem [{\citenamefont {Nielsen}\ and\ \citenamefont
  {Chuang}(2011)}]{nielsen}%
  \BibitemOpen
  \bibfield  {author} {\bibinfo {author} {\bibfnamefont {M.~A.}\ \bibnamefont
  {Nielsen}}\ and\ \bibinfo {author} {\bibfnamefont {I.~L.}\ \bibnamefont
  {Chuang}},\ }\href@noop {} {\emph {\bibinfo {title} {{Quantum Computation and
  Quantum Information: 10th Anniversary Edition}}}}\ (\bibinfo  {publisher}
  {Cambridge University Press},\ \bibinfo {year} {2011})\BibitemShut {NoStop}%
\bibitem [{\citenamefont {Caves}(2020)}]{cavesreframing}%
  \BibitemOpen
  \bibfield  {author} {\bibinfo {author} {\bibfnamefont {C.~M.}\ \bibnamefont
  {Caves}},\ }\bibfield  {title} {\bibinfo {title} {{Reframing SU(1,1)
  Interferometry}},\ }\href
  {https://doi.org/https://doi.org/10.1002/qute.201900138} {\bibfield
  {journal} {\bibinfo  {journal} {Advanced Quantum Technologies}\ }\textbf
  {\bibinfo {volume} {3}},\ \bibinfo {pages} {1900138} (\bibinfo {year}
  {2020})}\BibitemShut {NoStop}%
\bibitem [{\citenamefont {McCormick}\ \emph {et~al.}(2008)\citenamefont
  {McCormick}, \citenamefont {Marino}, \citenamefont {Boyer},\ and\
  \citenamefont {Lett}}]{PhysRevA.78.043816}%
  \BibitemOpen
  \bibfield  {author} {\bibinfo {author} {\bibfnamefont {C.~F.}\ \bibnamefont
  {McCormick}}, \bibinfo {author} {\bibfnamefont {A.~M.}\ \bibnamefont
  {Marino}}, \bibinfo {author} {\bibfnamefont {V.}~\bibnamefont {Boyer}},\ and\
  \bibinfo {author} {\bibfnamefont {P.~D.}\ \bibnamefont {Lett}},\ }\bibfield
  {title} {\bibinfo {title} {Strong low-frequency quantum correlations from a
  four-wave-mixing amplifier},\ }\href
  {https://doi.org/10.1103/PhysRevA.78.043816} {\bibfield  {journal} {\bibinfo
  {journal} {Phys. Rev. A}\ }\textbf {\bibinfo {volume} {78}},\ \bibinfo
  {pages} {043816} (\bibinfo {year} {2008})}\BibitemShut {NoStop}%
\bibitem [{\citenamefont {Ou}(2012)}]{PhysRevA.85.023815}%
  \BibitemOpen
  \bibfield  {author} {\bibinfo {author} {\bibfnamefont {Z.~Y.}\ \bibnamefont
  {Ou}},\ }\bibfield  {title} {\bibinfo {title} {Enhancement of the
  phase-measurement sensitivity beyond the standard quantum limit by a
  nonlinear interferometer},\ }\href
  {https://doi.org/10.1103/PhysRevA.85.023815} {\bibfield  {journal} {\bibinfo
  {journal} {Phys. Rev. A}\ }\textbf {\bibinfo {volume} {85}},\ \bibinfo
  {pages} {023815} (\bibinfo {year} {2012})}\BibitemShut {NoStop}%
\bibitem [{\citenamefont {Marino}\ \emph {et~al.}(2012)\citenamefont {Marino},
  \citenamefont {Corzo~Trejo},\ and\ \citenamefont
  {Lett}}]{PhysRevA.86.023844}%
  \BibitemOpen
  \bibfield  {author} {\bibinfo {author} {\bibfnamefont {A.~M.}\ \bibnamefont
  {Marino}}, \bibinfo {author} {\bibfnamefont {N.~V.}\ \bibnamefont
  {Corzo~Trejo}},\ and\ \bibinfo {author} {\bibfnamefont {P.~D.}\ \bibnamefont
  {Lett}},\ }\bibfield  {title} {\bibinfo {title} {{Effect of losses on the
  performance of an $SU(1,1)$ interferometer}},\ }\href
  {https://doi.org/10.1103/PhysRevA.86.023844} {\bibfield  {journal} {\bibinfo
  {journal} {Phys. Rev. A}\ }\textbf {\bibinfo {volume} {86}},\ \bibinfo
  {pages} {023844} (\bibinfo {year} {2012})}\BibitemShut {NoStop}%
\bibitem [{\citenamefont {Holevo}(1982)}]{holevo}%
  \BibitemOpen
  \bibfield  {author} {\bibinfo {author} {\bibfnamefont {A.}~\bibnamefont
  {Holevo}},\ }\href@noop {} {\emph {\bibinfo {title} {{Probabilistic and
  Statistical Aspects of Quantum Theory}}}}\ (\bibinfo  {publisher}
  {North-Holland, Amsterdam},\ \bibinfo {year} {1982})\BibitemShut {NoStop}%
\bibitem [{\citenamefont {Folland}(1989)}]{folland}%
  \BibitemOpen
  \bibfield  {author} {\bibinfo {author} {\bibfnamefont {G.}~\bibnamefont
  {Folland}},\ }\href@noop {} {\emph {\bibinfo {title} {Harmonic analysis in
  phase space}}}\ (\bibinfo  {publisher} {Princeton University Press},\
  \bibinfo {year} {1989})\BibitemShut {NoStop}%
\bibitem [{\citenamefont {Burgarth}\ \emph {et~al.}(2024)\citenamefont
  {Burgarth}, \citenamefont {Facchi}, \citenamefont {Nakazato}, \citenamefont
  {Pascazio},\ and\ \citenamefont {Yuasa}}]{PhysRevResearch.6.L042011}%
  \BibitemOpen
  \bibfield  {author} {\bibinfo {author} {\bibfnamefont {D.}~\bibnamefont
  {Burgarth}}, \bibinfo {author} {\bibfnamefont {P.}~\bibnamefont {Facchi}},
  \bibinfo {author} {\bibfnamefont {H.}~\bibnamefont {Nakazato}}, \bibinfo
  {author} {\bibfnamefont {S.}~\bibnamefont {Pascazio}},\ and\ \bibinfo
  {author} {\bibfnamefont {K.}~\bibnamefont {Yuasa}},\ }\bibfield  {title}
  {\bibinfo {title} {Central charge in quantum optics},\ }\href
  {https://doi.org/10.1103/PhysRevResearch.6.L042011} {\bibfield  {journal}
  {\bibinfo  {journal} {Phys. Rev. Res.}\ }\textbf {\bibinfo {volume} {6}},\
  \bibinfo {pages} {L042011} (\bibinfo {year} {2024})}\BibitemShut {NoStop}%
\bibitem [{\citenamefont {Volkoff}\ and\ \citenamefont
  {Dalvit}(2024)}]{PhysRevA.109.023704}%
  \BibitemOpen
  \bibfield  {author} {\bibinfo {author} {\bibfnamefont {T.~J.}\ \bibnamefont
  {Volkoff}}\ and\ \bibinfo {author} {\bibfnamefont {D.~A.~R.}\ \bibnamefont
  {Dalvit}},\ }\bibfield  {title} {\bibinfo {title} {{Nonperturbative
  Zou-Wang-Mandel effect}},\ }\href
  {https://doi.org/10.1103/PhysRevA.109.023704} {\bibfield  {journal} {\bibinfo
   {journal} {Phys. Rev. A}\ }\textbf {\bibinfo {volume} {109}},\ \bibinfo
  {pages} {023704} (\bibinfo {year} {2024})}\BibitemShut {NoStop}%
\bibitem [{\citenamefont {Wang}\ \emph {et~al.}(1991)\citenamefont {Wang},
  \citenamefont {Zou},\ and\ \citenamefont {Mandel}}]{PhysRevA.44.4614}%
  \BibitemOpen
  \bibfield  {author} {\bibinfo {author} {\bibfnamefont {L.~J.}\ \bibnamefont
  {Wang}}, \bibinfo {author} {\bibfnamefont {X.~Y.}\ \bibnamefont {Zou}},\ and\
  \bibinfo {author} {\bibfnamefont {L.}~\bibnamefont {Mandel}},\ }\bibfield
  {title} {\bibinfo {title} {Induced coherence without induced emission},\
  }\href {https://doi.org/10.1103/PhysRevA.44.4614} {\bibfield  {journal}
  {\bibinfo  {journal} {Phys. Rev. A}\ }\textbf {\bibinfo {volume} {44}},\
  \bibinfo {pages} {4614} (\bibinfo {year} {1991})}\BibitemShut {NoStop}%
\bibitem [{\citenamefont {Ou}\ \emph {et~al.}(1990)\citenamefont {Ou},
  \citenamefont {Wang}, \citenamefont {Zou},\ and\ \citenamefont
  {Mandel}}]{PhysRevA.41.1597}%
  \BibitemOpen
  \bibfield  {author} {\bibinfo {author} {\bibfnamefont {Z.~Y.}\ \bibnamefont
  {Ou}}, \bibinfo {author} {\bibfnamefont {L.~J.}\ \bibnamefont {Wang}},
  \bibinfo {author} {\bibfnamefont {X.~Y.}\ \bibnamefont {Zou}},\ and\ \bibinfo
  {author} {\bibfnamefont {L.}~\bibnamefont {Mandel}},\ }\bibfield  {title}
  {\bibinfo {title} {Coherence in two-photon down-conversion induced by a
  laser},\ }\href {https://doi.org/10.1103/PhysRevA.41.1597} {\bibfield
  {journal} {\bibinfo  {journal} {Phys. Rev. A}\ }\textbf {\bibinfo {volume}
  {41}},\ \bibinfo {pages} {1597} (\bibinfo {year} {1990})}\BibitemShut
  {NoStop}%
\bibitem [{Note1()}]{Note1}%
  \BibitemOpen
  \bibinfo {note} {The global relationship between the real symplectic groups
  and the pseudounitary group can be stated by the following strict
  containments as Lie subgroups: $Sp(2n,\protect \mathbb {R})\subset
  U(n,n)\subset Sp(4n,\protect \mathbb {R})$.}\BibitemShut {Stop}%
\bibitem [{\citenamefont {Holmes}\ \emph {et~al.}(1989)\citenamefont {Holmes},
  \citenamefont {Milburn},\ and\ \citenamefont {Walls}}]{PhysRevA.39.2493}%
  \BibitemOpen
  \bibfield  {author} {\bibinfo {author} {\bibfnamefont {C.~A.}\ \bibnamefont
  {Holmes}}, \bibinfo {author} {\bibfnamefont {G.~J.}\ \bibnamefont
  {Milburn}},\ and\ \bibinfo {author} {\bibfnamefont {D.~F.}\ \bibnamefont
  {Walls}},\ }\bibfield  {title} {\bibinfo {title} {Photon-number-state
  preparation in nondegenerate parametric amplification},\ }\href
  {https://doi.org/10.1103/PhysRevA.39.2493} {\bibfield  {journal} {\bibinfo
  {journal} {Phys. Rev. A}\ }\textbf {\bibinfo {volume} {39}},\ \bibinfo
  {pages} {2493} (\bibinfo {year} {1989})}\BibitemShut {NoStop}%
\bibitem [{\citenamefont {Walls}\ and\ \citenamefont {Milburn}(2007)}]{wm}%
  \BibitemOpen
  \bibfield  {author} {\bibinfo {author} {\bibfnamefont {D.~F.}\ \bibnamefont
  {Walls}}\ and\ \bibinfo {author} {\bibfnamefont {G.~J.}\ \bibnamefont
  {Milburn}},\ }\href@noop {} {\emph {\bibinfo {title} {{Quantum Optics, 2nd
  Ed.}}}}\ (\bibinfo  {publisher} {Springer-Verlag, Berlin},\ \bibinfo {year}
  {2007})\BibitemShut {NoStop}%
\bibitem [{\citenamefont {Plick}\ \emph {et~al.}(2010)\citenamefont {Plick},
  \citenamefont {Dowling},\ and\ \citenamefont {Agarwal}}]{Plick_2010}%
  \BibitemOpen
  \bibfield  {author} {\bibinfo {author} {\bibfnamefont {W.~N.}\ \bibnamefont
  {Plick}}, \bibinfo {author} {\bibfnamefont {J.~P.}\ \bibnamefont {Dowling}},\
  and\ \bibinfo {author} {\bibfnamefont {G.~S.}\ \bibnamefont {Agarwal}},\
  }\bibfield  {title} {\bibinfo {title} {Coherent-light-boosted, sub-shot
  noise, quantum interferometry},\ }\href
  {https://doi.org/10.1088/1367-2630/12/8/083014} {\bibfield  {journal}
  {\bibinfo  {journal} {New Journal of Physics}\ }\textbf {\bibinfo {volume}
  {12}},\ \bibinfo {pages} {083014} (\bibinfo {year} {2010})}\BibitemShut
  {NoStop}%
\bibitem [{\citenamefont {Jing}\ \emph {et~al.}(2011)\citenamefont {Jing},
  \citenamefont {Liu}, \citenamefont {Zhou}, \citenamefont {Ou},\ and\
  \citenamefont {Zhang}}]{10.1063/1.3606549}%
  \BibitemOpen
  \bibfield  {author} {\bibinfo {author} {\bibfnamefont {J.}~\bibnamefont
  {Jing}}, \bibinfo {author} {\bibfnamefont {C.}~\bibnamefont {Liu}}, \bibinfo
  {author} {\bibfnamefont {Z.}~\bibnamefont {Zhou}}, \bibinfo {author}
  {\bibfnamefont {Z.~Y.}\ \bibnamefont {Ou}},\ and\ \bibinfo {author}
  {\bibfnamefont {W.}~\bibnamefont {Zhang}},\ }\bibfield  {title} {\bibinfo
  {title} {Realization of a nonlinear interferometer with parametric
  amplifiers},\ }\href {https://doi.org/10.1063/1.3606549} {\bibfield
  {journal} {\bibinfo  {journal} {Applied Physics Letters}\ }\textbf {\bibinfo
  {volume} {99}},\ \bibinfo {pages} {011110} (\bibinfo {year}
  {2011})}\BibitemShut {NoStop}%
\bibitem [{\citenamefont {Hudelist}\ \emph {et~al.}(2014)\citenamefont
  {Hudelist}, \citenamefont {Kong}, \citenamefont {Liu}, \citenamefont {Jing},
  \citenamefont {Ou},\ and\ \citenamefont {Zhang}}]{Hudelist2014}%
  \BibitemOpen
  \bibfield  {author} {\bibinfo {author} {\bibfnamefont {F.}~\bibnamefont
  {Hudelist}}, \bibinfo {author} {\bibfnamefont {J.}~\bibnamefont {Kong}},
  \bibinfo {author} {\bibfnamefont {C.}~\bibnamefont {Liu}}, \bibinfo {author}
  {\bibfnamefont {J.}~\bibnamefont {Jing}}, \bibinfo {author} {\bibfnamefont
  {Z.~Y.}\ \bibnamefont {Ou}},\ and\ \bibinfo {author} {\bibfnamefont
  {W.}~\bibnamefont {Zhang}},\ }\bibfield  {title} {\bibinfo {title} {Quantum
  metrology with parametric amplifier-based photon correlation
  interferometers},\ }\href {https://doi.org/10.1038/ncomms4049} {\bibfield
  {journal} {\bibinfo  {journal} {Nature Communications}\ }\textbf {\bibinfo
  {volume} {5}},\ \bibinfo {pages} {3049} (\bibinfo {year} {2014})}\BibitemShut
  {NoStop}%
\bibitem [{\citenamefont {Zheng}\ \emph {et~al.}(2020)\citenamefont {Zheng},
  \citenamefont {M.}, \citenamefont {Wang}, \citenamefont {Xu}, \citenamefont
  {Hu}, \citenamefont {Liu}, \citenamefont {Lou}, \citenamefont {Jing},\ and\
  \citenamefont {Zhang}}]{jing}%
  \BibitemOpen
  \bibfield  {author} {\bibinfo {author} {\bibfnamefont {K.}~\bibnamefont
  {Zheng}}, \bibinfo {author} {\bibfnamefont {M.}~\bibnamefont {M.}}, \bibinfo
  {author} {\bibfnamefont {B.}~\bibnamefont {Wang}}, \bibinfo {author}
  {\bibfnamefont {L.}~\bibnamefont {Xu}}, \bibinfo {author} {\bibfnamefont
  {L.}~\bibnamefont {Hu}}, \bibinfo {author} {\bibfnamefont {S.}~\bibnamefont
  {Liu}}, \bibinfo {author} {\bibfnamefont {Y.}~\bibnamefont {Lou}}, \bibinfo
  {author} {\bibfnamefont {J.}~\bibnamefont {Jing}},\ and\ \bibinfo {author}
  {\bibfnamefont {L.}~\bibnamefont {Zhang}},\ }\bibfield  {title} {\bibinfo
  {title} {{Quantum-enhanced stochastic phase estimation with the SU(1,1)
  interferometer}},\ }\href {https://doi.org/10.1364/PRJ.395682} {\bibfield
  {journal} {\bibinfo  {journal} {Photonics Research}\ }\textbf {\bibinfo
  {volume} {8}},\ \bibinfo {pages} {1653} (\bibinfo {year} {2020})}\BibitemShut
  {NoStop}%
\bibitem [{\citenamefont {Gong}\ \emph {et~al.}(2017)\citenamefont {Gong},
  \citenamefont {Li}, \citenamefont {Yuan}, \citenamefont {Ou},\ and\
  \citenamefont {Zhang}}]{Gong_2017}%
  \BibitemOpen
  \bibfield  {author} {\bibinfo {author} {\bibfnamefont {Q.-K.}\ \bibnamefont
  {Gong}}, \bibinfo {author} {\bibfnamefont {D.}~\bibnamefont {Li}}, \bibinfo
  {author} {\bibfnamefont {C.-H.}\ \bibnamefont {Yuan}}, \bibinfo {author}
  {\bibfnamefont {Z.-Y.}\ \bibnamefont {Ou}},\ and\ \bibinfo {author}
  {\bibfnamefont {W.-P.}\ \bibnamefont {Zhang}},\ }\bibfield  {title} {\bibinfo
  {title} {{Phase estimation of phase shifts in two arms for an SU(1,1)
  interferometer with coherent and squeezed vacuum states}},\ }\href
  {https://doi.org/10.1088/1674-1056/26/9/094205} {\bibfield  {journal}
  {\bibinfo  {journal} {Chinese Physics B}\ }\textbf {\bibinfo {volume} {26}},\
  \bibinfo {pages} {094205} (\bibinfo {year} {2017})}\BibitemShut {NoStop}%
\bibitem [{\citenamefont {You}\ \emph {et~al.}(2019)\citenamefont {You},
  \citenamefont {Adhikari}, \citenamefont {Ma}, \citenamefont {Sasaki},
  \citenamefont {Takeoka},\ and\ \citenamefont {Dowling}}]{PhysRevA.99.042122}%
  \BibitemOpen
  \bibfield  {author} {\bibinfo {author} {\bibfnamefont {C.}~\bibnamefont
  {You}}, \bibinfo {author} {\bibfnamefont {S.}~\bibnamefont {Adhikari}},
  \bibinfo {author} {\bibfnamefont {X.}~\bibnamefont {Ma}}, \bibinfo {author}
  {\bibfnamefont {M.}~\bibnamefont {Sasaki}}, \bibinfo {author} {\bibfnamefont
  {M.}~\bibnamefont {Takeoka}},\ and\ \bibinfo {author} {\bibfnamefont {J.~P.}\
  \bibnamefont {Dowling}},\ }\bibfield  {title} {\bibinfo {title} {{Conclusive
  precision bounds for SU(1,1) interferometers}},\ }\href
  {https://doi.org/10.1103/PhysRevA.99.042122} {\bibfield  {journal} {\bibinfo
  {journal} {Phys. Rev. A}\ }\textbf {\bibinfo {volume} {99}},\ \bibinfo
  {pages} {042122} (\bibinfo {year} {2019})}\BibitemShut {NoStop}%
\bibitem [{\citenamefont {Chiribella}\ \emph {et~al.}(2006)\citenamefont
  {Chiribella}, \citenamefont {D'Ariano},\ and\ \citenamefont
  {Perinotti}}]{Chiribella2006}%
  \BibitemOpen
  \bibfield  {author} {\bibinfo {author} {\bibfnamefont {G.}~\bibnamefont
  {Chiribella}}, \bibinfo {author} {\bibfnamefont {G.~M.}\ \bibnamefont
  {D'Ariano}},\ and\ \bibinfo {author} {\bibfnamefont {P.}~\bibnamefont
  {Perinotti}},\ }\bibfield  {title} {\bibinfo {title} {{Applications of the
  group $SU(1, 1)$ for quantum computation and tomography}},\ }\href
  {https://doi.org/10.1134/S1054660X06110119} {\bibfield  {journal} {\bibinfo
  {journal} {Laser Physics}\ }\textbf {\bibinfo {volume} {16}},\ \bibinfo
  {pages} {1572} (\bibinfo {year} {2006})}\BibitemShut {NoStop}%
\bibitem [{\citenamefont {Torre}(2005)}]{TORRE2005111}%
  \BibitemOpen
  \bibfield  {author} {\bibinfo {author} {\bibfnamefont {A.}~\bibnamefont
  {Torre}},\ }\bibfield  {title} {\bibinfo {title} {{The Group of 1D
  First-Order Optical Systems}},\ }in\ \href
  {https://doi.org/https://doi.org/10.1016/B978-044451799-9/50003-0} {\emph
  {\bibinfo {booktitle} {Linear Ray and Wave Optics in Phase Space}}},\
  \bibinfo {editor} {edited by\ \bibinfo {editor} {\bibfnamefont
  {A.}~\bibnamefont {Torre}}}\ (\bibinfo  {publisher} {Elsevier Science},\
  \bibinfo {address} {Amsterdam},\ \bibinfo {year} {2005})\ pp.\ \bibinfo
  {pages} {111--166}\BibitemShut {NoStop}%
\bibitem [{\citenamefont {Volkoff}\ \emph {et~al.}(2021)\citenamefont
  {Volkoff}, \citenamefont {Holmes},\ and\ \citenamefont
  {Sornborger}}]{PRXQuantum.2.040327}%
  \BibitemOpen
  \bibfield  {author} {\bibinfo {author} {\bibfnamefont {T.}~\bibnamefont
  {Volkoff}}, \bibinfo {author} {\bibfnamefont {Z.}~\bibnamefont {Holmes}},\
  and\ \bibinfo {author} {\bibfnamefont {A.}~\bibnamefont {Sornborger}},\
  }\bibfield  {title} {\bibinfo {title} {Universal compiling and
  (no-)free-lunch theorems for continuous-variable quantum learning},\ }\href
  {https://doi.org/10.1103/PRXQuantum.2.040327} {\bibfield  {journal} {\bibinfo
   {journal} {PRX Quantum}\ }\textbf {\bibinfo {volume} {2}},\ \bibinfo {pages}
  {040327} (\bibinfo {year} {2021})}\BibitemShut {NoStop}%
\bibitem [{\citenamefont {Becker}\ \emph {et~al.}(2021)\citenamefont {Becker},
  \citenamefont {Datta}, \citenamefont {Lami},\ and\ \citenamefont
  {Rouz\'e}}]{PhysRevLett.126.190504}%
  \BibitemOpen
  \bibfield  {author} {\bibinfo {author} {\bibfnamefont {S.}~\bibnamefont
  {Becker}}, \bibinfo {author} {\bibfnamefont {N.}~\bibnamefont {Datta}},
  \bibinfo {author} {\bibfnamefont {L.}~\bibnamefont {Lami}},\ and\ \bibinfo
  {author} {\bibfnamefont {C.}~\bibnamefont {Rouz\'e}},\ }\bibfield  {title}
  {\bibinfo {title} {{Energy-Constrained Discrimination of Unitaries, Quantum
  Speed Limits, and a Gaussian Solovay-Kitaev Theorem}},\ }\href
  {https://doi.org/10.1103/PhysRevLett.126.190504} {\bibfield  {journal}
  {\bibinfo  {journal} {Phys. Rev. Lett.}\ }\textbf {\bibinfo {volume} {126}},\
  \bibinfo {pages} {190504} (\bibinfo {year} {2021})}\BibitemShut {NoStop}%
\bibitem [{\citenamefont {Knapp}(1996)}]{knapp}%
  \BibitemOpen
  \bibfield  {author} {\bibinfo {author} {\bibfnamefont {A.~W.}\ \bibnamefont
  {Knapp}},\ }\href@noop {} {\emph {\bibinfo {title} {{Lie Groups Beyond and
  Introduction}}}}\ (\bibinfo  {publisher} {Birkhauser},\ \bibinfo {year}
  {1996})\BibitemShut {NoStop}%
\bibitem [{\citenamefont {Liu}\ \emph {et~al.}(2023)\citenamefont {Liu},
  \citenamefont {Zhang},\ and\ \citenamefont {Miao}}]{Liu_2023}%
  \BibitemOpen
  \bibfield  {author} {\bibinfo {author} {\bibfnamefont {M.}~\bibnamefont
  {Liu}}, \bibinfo {author} {\bibfnamefont {L.}~\bibnamefont {Zhang}},\ and\
  \bibinfo {author} {\bibfnamefont {H.}~\bibnamefont {Miao}},\ }\bibfield
  {title} {\bibinfo {title} {{Adaptive protocols for SU(1,1) interferometers to
  achieve ab initio phase estimation at the Heisenberg limit}},\ }\href
  {https://doi.org/10.1088/1367-2630/ad042f} {\bibfield  {journal} {\bibinfo
  {journal} {New Journal of Physics}\ }\textbf {\bibinfo {volume} {25}},\
  \bibinfo {pages} {103051} (\bibinfo {year} {2023})}\BibitemShut {NoStop}%
\bibitem [{\citenamefont {Braunstein}\ and\ \citenamefont
  {Caves}(1994)}]{PhysRevLett.72.3439}%
  \BibitemOpen
  \bibfield  {author} {\bibinfo {author} {\bibfnamefont {S.~L.}\ \bibnamefont
  {Braunstein}}\ and\ \bibinfo {author} {\bibfnamefont {C.~M.}\ \bibnamefont
  {Caves}},\ }\bibfield  {title} {\bibinfo {title} {Statistical distance and
  the geometry of quantum states},\ }\href
  {https://doi.org/10.1103/PhysRevLett.72.3439} {\bibfield  {journal} {\bibinfo
   {journal} {Phys. Rev. Lett.}\ }\textbf {\bibinfo {volume} {72}},\ \bibinfo
  {pages} {3439} (\bibinfo {year} {1994})}\BibitemShut {NoStop}%
\bibitem [{\citenamefont {Pinel}\ \emph {et~al.}(2012)\citenamefont {Pinel},
  \citenamefont {Fade}, \citenamefont {Braun}, \citenamefont {Jian},
  \citenamefont {Treps},\ and\ \citenamefont {Fabre}}]{PhysRevA.85.010101}%
  \BibitemOpen
  \bibfield  {author} {\bibinfo {author} {\bibfnamefont {O.}~\bibnamefont
  {Pinel}}, \bibinfo {author} {\bibfnamefont {J.}~\bibnamefont {Fade}},
  \bibinfo {author} {\bibfnamefont {D.}~\bibnamefont {Braun}}, \bibinfo
  {author} {\bibfnamefont {P.}~\bibnamefont {Jian}}, \bibinfo {author}
  {\bibfnamefont {N.}~\bibnamefont {Treps}},\ and\ \bibinfo {author}
  {\bibfnamefont {C.}~\bibnamefont {Fabre}},\ }\bibfield  {title} {\bibinfo
  {title} {{Ultimate sensitivity of precision measurements with intense
  Gaussian quantum light: A multimodal approach}},\ }\href
  {https://doi.org/10.1103/PhysRevA.85.010101} {\bibfield  {journal} {\bibinfo
  {journal} {Phys. Rev. A}\ }\textbf {\bibinfo {volume} {85}},\ \bibinfo
  {pages} {010101} (\bibinfo {year} {2012})}\BibitemShut {NoStop}%
\bibitem [{\citenamefont {Du}\ \emph {et~al.}(2022)\citenamefont {Du},
  \citenamefont {Kong}, \citenamefont {Bao}, \citenamefont {Yang},
  \citenamefont {Jia}, \citenamefont {Ming}, \citenamefont {Yuan},
  \citenamefont {Chen}, \citenamefont {Ou}, \citenamefont {Mitchell},\ and\
  \citenamefont {Zhang}}]{PhysRevLett.128.033601}%
  \BibitemOpen
  \bibfield  {author} {\bibinfo {author} {\bibfnamefont {W.}~\bibnamefont
  {Du}}, \bibinfo {author} {\bibfnamefont {J.}~\bibnamefont {Kong}}, \bibinfo
  {author} {\bibfnamefont {G.}~\bibnamefont {Bao}}, \bibinfo {author}
  {\bibfnamefont {P.}~\bibnamefont {Yang}}, \bibinfo {author} {\bibfnamefont
  {J.}~\bibnamefont {Jia}}, \bibinfo {author} {\bibfnamefont {S.}~\bibnamefont
  {Ming}}, \bibinfo {author} {\bibfnamefont {C.-H.}\ \bibnamefont {Yuan}},
  \bibinfo {author} {\bibfnamefont {J.~F.}\ \bibnamefont {Chen}}, \bibinfo
  {author} {\bibfnamefont {Z.~Y.}\ \bibnamefont {Ou}}, \bibinfo {author}
  {\bibfnamefont {M.~W.}\ \bibnamefont {Mitchell}},\ and\ \bibinfo {author}
  {\bibfnamefont {W.}~\bibnamefont {Zhang}},\ }\bibfield  {title} {\bibinfo
  {title} {{$SU(2)$-in-$SU(1,1)$ Nested Interferometer for High Sensitivity,
  Loss-Tolerant Quantum Metrology}},\ }\href
  {https://doi.org/10.1103/PhysRevLett.128.033601} {\bibfield  {journal}
  {\bibinfo  {journal} {Phys. Rev. Lett.}\ }\textbf {\bibinfo {volume} {128}},\
  \bibinfo {pages} {033601} (\bibinfo {year} {2022})}\BibitemShut {NoStop}%
\bibitem [{\citenamefont {Agarwal}(2025)}]{agarwal2025}%
  \BibitemOpen
  \bibfield  {author} {\bibinfo {author} {\bibfnamefont {G.~S.}\ \bibnamefont
  {Agarwal}},\ }\href {https://arxiv.org/abs/2504.20228} {\bibinfo {title}
  {{Saturation of Quantum Cramer-Rao Bounds for Distributed Sensing via Error
  Sensitivity in $SU(1,1)$-$SU(m)$ Interferometry}}} (\bibinfo {year} {2025}),\
  \Eprint {https://arxiv.org/abs/2504.20228} {arXiv:2504.20228 [quant-ph]}
  \BibitemShut {NoStop}%
\bibitem [{\citenamefont {Omanakuttan}\ \emph {et~al.}(2024)\citenamefont
  {Omanakuttan}, \citenamefont {Gross},\ and\ \citenamefont {Volkoff}}]{ogv}%
  \BibitemOpen
  \bibfield  {author} {\bibinfo {author} {\bibfnamefont {S.}~\bibnamefont
  {Omanakuttan}}, \bibinfo {author} {\bibfnamefont {J.~A.}\ \bibnamefont
  {Gross}},\ and\ \bibinfo {author} {\bibfnamefont {T.~J.}\ \bibnamefont
  {Volkoff}},\ }\href@noop {} {\bibinfo {title} {{Quantum error
  correction-inspired multiparameter quantum metrology}}} (\bibinfo {year}
  {2024}),\ \Eprint {https://arxiv.org/abs/2409.16515} {arXiv:2409.16515
  [quant-ph]} \BibitemShut {NoStop}%
\bibitem [{\citenamefont {Du}\ \emph {et~al.}(2025)\citenamefont {Du},
  \citenamefont {Liu}, \citenamefont {Steinhoff},\ and\ \citenamefont
  {Vitagliano}}]{du2025characterizingresourcesmultiparameterestimation}%
  \BibitemOpen
  \bibfield  {author} {\bibinfo {author} {\bibfnamefont {S.}~\bibnamefont
  {Du}}, \bibinfo {author} {\bibfnamefont {S.}~\bibnamefont {Liu}}, \bibinfo
  {author} {\bibfnamefont {F.~E.~S.}\ \bibnamefont {Steinhoff}},\ and\ \bibinfo
  {author} {\bibfnamefont {G.}~\bibnamefont {Vitagliano}},\ }\href@noop {}
  {\bibinfo {title} {{Characterizing resources for multiparameter estimation of
  $SU(2)$ and $SU(1,1)$ unitaries}}} (\bibinfo {year} {2025}),\ \Eprint
  {https://arxiv.org/abs/2412.19119} {arXiv:2412.19119 [quant-ph]} \BibitemShut
  {NoStop}%
\bibitem [{\citenamefont {Serafini}(2017)}]{serafini}%
  \BibitemOpen
  \bibfield  {author} {\bibinfo {author} {\bibfnamefont {A.}~\bibnamefont
  {Serafini}},\ }\href@noop {} {\emph {\bibinfo {title} {{Quantum Continuous
  Variables: A Primer of Theoretical Methods}}}}\ (\bibinfo  {publisher} {CRC
  Press},\ \bibinfo {year} {2017})\BibitemShut {NoStop}%
\bibitem [{\citenamefont {Seshadreesan}\ \emph {et~al.}(2018)\citenamefont
  {Seshadreesan}, \citenamefont {Lami},\ and\ \citenamefont
  {Wilde}}]{10.1063/1.5007167}%
  \BibitemOpen
  \bibfield  {author} {\bibinfo {author} {\bibfnamefont {K.~P.}\ \bibnamefont
  {Seshadreesan}}, \bibinfo {author} {\bibfnamefont {L.}~\bibnamefont {Lami}},\
  and\ \bibinfo {author} {\bibfnamefont {M.~M.}\ \bibnamefont {Wilde}},\
  }\bibfield  {title} {\bibinfo {title} {{Rényi relative entropies of quantum
  Gaussian states}},\ }\href {https://doi.org/10.1063/1.5007167} {\bibfield
  {journal} {\bibinfo  {journal} {Journal of Mathematical Physics}\ }\textbf
  {\bibinfo {volume} {59}},\ \bibinfo {pages} {072204} (\bibinfo {year}
  {2018})}\BibitemShut {NoStop}%
\bibitem [{\citenamefont {Marian}\ and\ \citenamefont
  {Marian}(2012)}]{PhysRevA.86.022340}%
  \BibitemOpen
  \bibfield  {author} {\bibinfo {author} {\bibfnamefont {P.}~\bibnamefont
  {Marian}}\ and\ \bibinfo {author} {\bibfnamefont {T.~A.}\ \bibnamefont
  {Marian}},\ }\bibfield  {title} {\bibinfo {title} {{Uhlmann fidelity between
  two-mode Gaussian states}},\ }\href
  {https://doi.org/10.1103/PhysRevA.86.022340} {\bibfield  {journal} {\bibinfo
  {journal} {Phys. Rev. A}\ }\textbf {\bibinfo {volume} {86}},\ \bibinfo
  {pages} {022340} (\bibinfo {year} {2012})}\BibitemShut {NoStop}%
\end{thebibliography}%

\appendix


\newpage

\section{Phase space properties of the Hamiltonian model\label{sec:aaa}}

In this Appendix, we derive the covariance matrix for the full state $\ket{\psi(g,\theta,\phi)}:= e^{-iH}\ket{0}\ket{0}$ and its symplectic eigenvalues, with the Hamiltonian $H$ defined 
\begin{align}
H&=H_{0} + H_{\text{int}} \nonumber \\
H_{0}&= \theta(a_{1}^{\dagger}a_{1}+a_{2}^{\dagger}a_{2})\nonumber \\
H_{\text{int}}&= {2g\sin{\phi\over 2}}\left( ie^{-i{\phi\over 2}}a_{1}^{\dagger}a_{2}^{\dagger} +h.c.\right).
\label{eqn:a1}
\end{align}
Note that we have referred to unitary operators such as $e^{-iH}$ or the circuit (\ref{eqn:su11ccc}) simply as elements of $SU(1,1)$. Since the input state under consideration is always $\ket{0}\ket{0}$, the well-definedness of these operators on $\ell^{2}(\mathbb{C})^{\otimes 2}$ up to a phase follows the construction of the metaplectic representation of $SU(1,1)$, i.e., the simply connected double cover of $SU(1,1)$, using Nelson's theorem (Proposition 4.49 of \cite{folland})).

The covariance matrix is defined by
\begin{align}
 \Sigma_{\ket{\psi(g,\theta,\phi)}}&= \langle \psi(g,\theta,\phi)\vert R^{T}\circ R \vert \psi(g,\theta,\phi)\rangle,
 \end{align}
where $R:=(q_{1},p_{1},q_{2},p_{2})$ a row vector of canonical observables,  $\circ$ the entrywise Jordan product $A\circ B:= {1\over 2}(AB+BA)$, and the symplectic form on the phase space $\mathbb{R}^{4}$ is $\Delta = \bigoplus_{j=1}^{2}\begin{pmatrix} 0&1\\-1&0\end{pmatrix}$.   Corresponding to the $SU(1,1)$ element $e^{-iH}$ in the two-mode bosonic realization, there is a symplectic matrix $T_{e^{-iH}} \in Sp(4,\mathbb{R})$ is obtained from the equation (the metaplectic representation of $Sp(4,\mathbb{R})$)
\begin{equation}
    e^{iH}Re^{-iH} = RT_{e^{-iH}}
\end{equation}
and satisfies $T_{e^{-iH}}^{\intercal}\Delta T_{e^{-iH}}=\Delta$ by definition \cite{serafini}.  Introducing the variables $\lambda:=2g \sin {\phi\over 2}$ and $x:=\sqrt{\theta^{2}-\lambda^{2}}$ and $\xi = {\pi \over 2} - {\phi\over 2}$ in the \textbf{Domain 1} defined by $4g^{2}\sin^{2}{\phi\over 2} \le \theta^{2}$, one obtains
\begin{equation}
    T_{e^{-iH}}=\begin{pmatrix}
        \cos x & - {\theta\sin x\over x} & {\lambda \sin \xi \sin x\over x} & -{\lambda \cos \xi \sin x\over x} \\
        {\theta\sin x\over x} & \cos x & -{\lambda \cos \xi \sin x\over x} & -{\lambda \sin \xi \sin x\over x}\\
        {\lambda \sin \xi \sin x\over x}& -{\lambda \cos \xi \sin x\over x}& \cos x & -{\theta\sin x\over x} \\
        -{\lambda \cos \xi \sin x\over x}& -{\lambda \sin \xi \sin x\over x} & {\theta\sin x\over x} & \cos x
    \end{pmatrix}
\end{equation}
The covariance matrix is obtained as 
\begin{equation}
    \Sigma_{\ket{\psi(g,\theta,\phi)}} = {1\over 2}T_{e^{-iH}}^{\intercal}T_{e^{-iH}}
    \label{eqn:d1cov}
\end{equation}
and the total energy, including vacuum energy $1/2$ per mode, is given by 
\begin{equation}
   {1\over 2}\text{tr} \Sigma_{\ket{\psi(g,\theta,\phi)}}= {\theta^{2} - \lambda^{2}\cos 2x \over x^{2}}
   \label{eqn:e1}
\end{equation}
In \textbf{Domain 2} defined by $4g^{2}\sin^{2}{\phi\over 2} \ge \theta^{2}$, we introduce the variables $\lambda:=2g \sin {\phi\over 2}$ and $x:=\sqrt{\lambda^{2}-\theta^{2}}$ and $\xi = {\pi \over 2} - {\phi\over 2}$. The symplectic matrix is then given by
\begin{equation}
    T_{e^{-iH}}=\begin{pmatrix}
        \cosh x & - {\theta\sinh x\over x} & {\lambda \sin \xi \sinh x\over x} & -{\lambda \cos \xi \sinh x\over x} \\
        {\theta\sinh x\over x} & \cosh x & -{\lambda \cos \xi \sinh x\over x} & -{\lambda \sin \xi \sinh x\over x}\\
        {\lambda \sin \xi \sinh x\over x}& -{\lambda \cos \xi \sinh x\over x}& \cosh x & -{\theta\sinh x\over x} \\
        -{\lambda \cos \xi \sinh x\over x}& -{\lambda \sin \xi \sinh x\over x} & {\theta\sinh x\over x} & \cosh x
    \end{pmatrix}
\end{equation}
with covariance matrix having the same expression as (\ref{eqn:d1cov}) and total energy
\begin{equation}
   {1\over 2}\text{tr} \Sigma_{\ket{\psi(g,\theta,\phi)}}= { \lambda^{2}\cosh 2x - \theta^{2} \over x^{2}}.
   \label{eqn:e2}
\end{equation}

The reduced state $\rho_{1}:= \text{tr}_{2}\psi(g,\theta,\phi)$ ($\rho_{2}$ is the same due to the symmetry of the modes) has symplectic eigenvalue equal to $1/2$ times the expressions in (\ref{eqn:e1}), (\ref{eqn:e2}) in the respective parameter domains.
This fact allows to obtain entanglement entropy, relative entropies \cite{10.1063/1.5007167}, and fidelities \cite{PhysRevA.86.022340} of the involved modes, and is most easily proved using the expression of $e^{-iH}\ket{0}\ket{0}$ as a two mode squeezed state in Lemma \ref{lem:lll1} (similarly for \textbf{Domain 1}). Specifically, the covariance matrix for the state $e^{\zeta a_{1}^{\dagger}a_{2}^{\dagger} - \overline{\zeta}a_{1}a_{2}}\ket{0}\ket{0}$ with $\zeta : = fe^{iu}$ is
\begin{equation}
    {1\over 2}\cosh 2f \,\mathbb{I}_{2}\otimes \mathbb{I}_{2} + {1\over 2}\sinh 2f \cos u \, X\otimes Z + {1\over 2}\sinh f \sin u \, X\otimes X
\end{equation}
with
\begin{align}
    \cosh 2f &= {\theta^{2}-\lambda^{2}\cos 2x \over x^{2}} \nonumber \\
    \cos u &= {x\cos x \sin \xi - \theta\sin x \cos \xi \over \sqrt{\theta^{2}-\lambda^{2}\cos^{2}x }}
\end{align}
in \textbf{Domain 1} and
\begin{align}
    \cosh 2f &= {\lambda^{2}\cosh 2x - \theta^{2}\over x^{2}} \nonumber \\
    \cos u &= {x\cosh x \sin \xi - \theta\sinh x \cos \xi \over \sqrt{\lambda^{2}\cosh^{2}x - \theta^{2}}}
\end{align}
in \textbf{Domain 2}.

\end{document}